\newtheorem{theorem}{Theorem}[section]
\newtheorem{lemma}[theorem]{Lemma}
\newtheorem{proposition}[theorem]{Proposition}
\newtheorem{corollary}[theorem]{Corollary}
\newtheorem{assumption}{Assumption}
\newcommand{\pprob}{\mathbb{P}}
\newcommand{\expec}{\mathbb{E}}
\newcommand{\Exp}[1]{\expec\left[#1\right]}
\newcommand{\Expp}[1]{\expec_\pprob\left[#1\right]}
\newcommand{\bigO}[1]{O\left(#1\right)}
\newcommand{\supp}{\textup{supp}}
\def\Aut{{\textup{Aut}}}
\begin{document}
\title{Robust subgraph counting with distribution-free random graph analysis}

\author{Johan S.H. van Leeuwaarden}
\affiliation{Tilburg University}
\author{Clara Stegehuis}
\affiliation{ University of Twente}

\date{\today}

\begin{abstract} 
Subgraphs such as cliques, loops and stars form crucial connections in the topologies of real-world networks. Random graph models provide estimates for how often certain subgraphs appear, which in turn can be tested against real-world networks. These subgraph counts, however, crucially depend on the assumed degree distribution. Fitting a degree distribution to network data is challenging, in particular for scale-free networks with power-law degrees. In this paper we develop robust subgraph counts that do not depend on the entire degree distribution, but only on the mean and mean  absolute  deviation  (MAD), summary statistics that are easy to obtain for most real-world networks. By solving an optimization problem, we provide tight (the sharpest possible) bounds for the subgraph counts, for all possible subgraphs, and for all networks with degree distributions that share the same mean and MAD.  We identify the extremal random graph that attains the tight bounds as the graph with a specific three-point degree distribution. We leverage the bounds to obtain robust  scaling  laws  for how the numbers of subgraphs grow as function of  the  network  size.
The scaling laws indicate that sparse power-law networks are not the most extreme networks in terms of subgraph counts, but dense power-law networks are.  The robust bounds are also shown to hold for several real-world data sets. 
\end{abstract}

\maketitle

\section{Introduction}
%Real-world networks such as social networks, biological networks and communication networks display fascinating structural properties such as small distances, communities and subgraphs. 
The occurrence of specific subgraphs like cliques, loops and stars have been proven important for understanding and classifying complex networks, such as social, biological  and technological networks. The triangle is a much studied subgraph, because it
 describes local clustering and signals  community structure. Other subgraphs such as larger cliques
are also important for understanding network organization. Counting how often certain subgraphs appear, in partical in large-scale networks, is therefore a central topic in network science.
Indeed, subgraph counts might vary considerably across different networks~\cite{milo2002,shen-orr2002,onnela2005,tran2013} and any given network has a set of statistically significant subgraphs (also called motifs). 

Subgraphs in complex networks are broadly studied through random graphs, mathematically tractable models that can generate random samples of a graph in which nodes have i.i.d.~degrees~\cite{bianconi2005,itzkovitz2003subgraphs,hofstad2017b,stegehuis2019b,van2021optimal}. 
%The validation of a random graph model---determining whether the model is an accurate representation of reality---is a crucial step. 
Random graph models take the degree distribution as input.
%, and the question is then which universal properties are determined by this degree distribution. 
Conditional on the degree distribution, random graph properties such as average distance, clustering and clique formation can often be characterized and in turn be tested against measurements from real-world network data with the same degree distribution. Classical choices for degree distributions include the Poisson distribution and power-law distributions. 
The latter also often arises in analyzing real-world networks, as their degree distributions can often be approximated with a power-law distribution. 
% The general approach of using parsimonious network models (only requiring e.g.~the degree distribution) to explain network properties was also successfully applied to scale-free networks with power-law degree distributions and infinite variance in the large-network limit. The heavy tail of the power law comes with hubs, vertices of extremely high degree. Hubs create ultra-small distances and ultra-fast information spreading and resilience against random attacks, while the average node degree is small. Power laws also come with specific subgraph formation mechanisms. Indeed, every subgraph typically appears at specific vertex degrees, caused by a trade-off between the many connections of hubs and their rareness~\cite{stegehuis2019b,hofstad2017d}.
%Attested by real-world data and explained by mathematical models, the consequences of power-law connectivity patterns now belong to the foundations of network science.
%Besides explaining real network data, another important goal of complex network models is to identify universal properties that apply to a broad class of real world networks. 

%Power-law degrees are associated with universal properties such as short distances and high connectivity, even in sparse networks where the average degree remains constant while the number of nodes grows large. 
A random variable obeys a power law if it is drawn from a probability
distribution
\begin{equation}
\pprob(h)\propto h^{-\tau}, \quad h\geq 0,
\end{equation}
with $\tau>0$ a constant known as the power law exponent. Many real-world networks were shown to be approximated by a power-law degree distribution with an exponent in the range $2<\tau<3$ \cite{jeong2000,vazquez2002,faloutsos1999}. 
Fitting a power law to real-world data, however, is statistically challenging~\cite{clauset2009,broido2018,voitalov2019scale}.
%At first glance you can recognize a power law on a loglog plot of the probability distribution. A power law should then give a straight downward line with a slope equal to minus the power law exponent. But promoting such elementary observations to statistically rigorous statements is not easy~\cite{broido2018}. 
%because a power law will never give a perfect fit to real network data, especially for the ranges where the power law takes small and large values.
For small values, a power law is usually not a good fit. Therefore, 
one typically assumes that it only holds for values greater than some minimum $h_{\min}$. 
%To estimate $\tau$, one needs to estimate $h_\min$, and in fact estimating both quantities is challenging. 
Alternatively, one can consider a family of distributions of the form
\begin{equation}\label{slowpow}
\pprob(h)\propto L(h) h^{-\tau}, \quad h\geq 0,
\end{equation}
where $L(h)$ is some slowly varying function, so that $L(ch)/L(h)\to 1$ for any $c > 0$ as $h\to\infty$. The function $L(h)$ can then account for deviations from the pure power law caused by smaller values of the distribution. Hence, both $h_{\min}$ and $L(h)$ are ways to deal with imperfection in data for smaller degree values. However, both $h_{\min}$ and $L(h)$ create extra parameters that make fitting the distribution to real-world data more difficult.

Larger values of the power law also present challenges. Most real-world data sets only follow a power law up to some maximal degree, which is often modeled by an exponential cutoff~\cite{newman2001a,mossa2002,ebel2002}. Real-world networks are in fact finite by definition, while a power law allows infinitely large values.
%, and thus possibly infinite moments, such as infinite variance. 
%So when a network is assumed to have a power-law degree distribution we mean, in most cases, that as the network grows, the degree distribution starts resembling a power law. 
%This also means that infinitely large networks and limit theory play an important role in the mathematical study of complex networks with power law degrees. 
One way to link finite networks with the possibly infinite power-law values is to scale the maximum degree, called cutoff, as function of the network size. 
%This gives a well-defined power-law distribution in the infinite-network limit.  
%Thus, cutoffs on the maximum degree are a natural way to relate mathematically convenient distributions with infinite support to real-world, finite network data.  
We will use this mathematical approach of scaling the cutoff as function of the network size also in this paper.

Network properties claimed to be universal should not be overly sensitive to the assumed degree distribution, especially when this distribution is hard to justify statistically. For  power laws as in \eqref{slowpow} for instance, the exact shape of $L(h)$ is often not crucial, while the tail exponent $\tau$ implies vastly different network properties. One reason for this is the variance of the degree distribution. 
When the number of nodes $n$ becomes large,  the variance grows to infinity for $\tau<3$, while the variance remains finite for $\tau>3$. This difference in variance growth  crucially influences network structure and subgraph formation \cite{ostilli2014,itzkovitz2003subgraphs,stegehuis2019b}.  %For example, for $\tau<3$, some subgraph counts and their variances are highly influenced by the largest degree vertex~\cite{ostilli2014,itzkovitz2003subgraphs}, while others are not, similarly to when $\tau>3$~\cite{stegehuis2019b}. 

In this paper, we characterize subgraph counts in random graphs that only require partial information about the degree distribution. Inspired by the example of the complicated assessment of power laws, we assume that we only know the mean, range and mean absolute deviation (MAD) of the degree distribution. The MAD is an alternative to variance for measuring dispersion around the mean, and may be more appropriate in case of heavy tails. Indeed, MAD can deal with distributions that do not possess a finite variance, in particular the class of power-law distribution with $\tau\in(2,3)$, for which MAD remains finite while variance becomes infinite in the large-network limit when $n\to\infty$.

We shall identify the maximal subgraph count that can be achieved by all degree distributions that have the same mean, range and MAD. Consider a subgraph $H$ and the subgraph count  $\Expp{N_H}$ defined as the expected number of subgraphs $H$ that appear in a random graph. By constructing a maximization problem we determine the extremal degree distribution $\pprob$ that maximizes the subgraph count $\Expp{N_H}$ and will refer to the random graph with the extremal degree distribution as the extremal random graph. We solve this maximization problem
for the hidden-variable model~\cite{chung2002,boguna2004}, a random graph model that generates graphs with degrees that approximately follow some given distribution. The hidden-variable model is a widely applied model due to generality and tractability: many network properties of the model have been investigated, such as degree-degree correlations, clustering, typical distances and epidemic spreading~\cite{colomer2012,boguna2004,hofstad2017b,bianconi2006,pastor2014}. Typically, these properties depend strongly on the input distribution of the hidden variables. For example, whether or not the second moment of the input distribution diverges, strongly influences the degree-correlations, distances and behavior under epidemic processes. We therefore provide bounds on the subgraph counts that hold for all input distributions with the same mean and MAD.

For the hidden-variable model we first show that $\Expp{N_H}$ is a convex function of the hidden variables. % is convex in $(h_1,\ldots,h_n)$. 
We then employ a method from distributionally robust optimization for maximizing a convex function under certain constraints. 
%We show that for any given subgraph $H$, the expected number of appearances of $H$ in the random graph model, $E(H)$, can be written as a convex function of the hidden variables $h_1,\ldots,h_n$.  
We seek for the maximum of $\Expp{N_H}$, given the constraints on the degree distribution in terms of the mean, range and MAD. This gives rise to a semi-infinite linear program with as solution the robust subgraph bounds.

%a three-point degree distribution. Hence, the maximum count is achieved by an extremal graph model with a three-point degree distribution. By analyzing the extremal graph model in more detail, we are then able to describe how the maximal subgraph counts scale with the network size.% $n$. 

Here are the main contributions of this paper:
\begin{itemize}
    \item[(i)] We show that for all subgraphs $H$ the expected number of subgraphs $\Expp{N_H}$ is maximized for the extremal graph model with a three-point degree distribution. This maximal number of subgraphs only depends on the mean, MAD and range of the degree distribution, and therefore does not need further detailed assumptions on the degree distribution.
    In particular, we show that this extremal graph model is the same for all possible subgraphs $H$ and network sizes. % and for a wide range of random graph models including the classical Chung-Lu model.
    \item[(ii)] We derive scaling laws for $\Expp{N_H}$ when the network size $n$ grows to infinity. These scaling laws provide an inherent order over all subgraphs in terms of the maximal number of copies of such subgraphs in any hidden-variable model, and provide a method to compare the denseness of subgraphs created by any degree distribution to its absolute maximum provided by our bounds.
    We also show that the Chung-Lu model achieves the maximal number of all types of subgraphs among all hidden-variable models.
    \item[(iii)] We compare the extremal graph model that provides the highest subgraph counts to existing results for power-law random graphs. % with the same mean and MAD. 
We show that for all subgraphs, a power-law degree distribution with $\tau\leq 2$ achieves the maximal subgraph scaling predicted by our bounds. This shows that power-laws with $\tau\leq 2$ are densest networks in terms of subgraph counts among all networks with the same average degree and MAD. Power laws with larger exponents ($\tau>2$) do not achieve the maximal scaling, and the subgraph counts scale at a slower rate. This shows that in the sparse setting ($\tau>2$), where the average degree does not grow, power-law networks are not `optimal' in terms of subgraph counts, whereas in the dense setting ($\tau\leq 2$) they are.
\item[(iv)] We demonstrate that the robust bounds indeed bound the number of subgraphs in nine real-world data sets. This analysis does not require any assumption on the degree distribution of these data sets; only knowledge of the average degree, the MAD, and the maximal degree is required. In particular, the robust bounds work for both power-law and non power-law distributed data. 
\end{itemize}

%In particular, we show that this extremal hidden-variable model is the same for all possible subgraph counts
%, and therefore also for the global clustering coefficient. For local clustering $c(k)$ and degree correlations $a(k)$, the extremal random graph depends on $k$. Furthermore, for low values of $k$, the extremal random graphs for maximizing $c(k)$ and $a(k)$ are the same. For higher values of $k$ however, they become different.

%\bl{Misschien hier ook nog wat literatuur bespreken. Hebben onze resultaten ook overeenkomsten met de scaling laws in
%\cite{itzkovitz2003subgraphs}? \textcolor{purple}{Dat paper gaat over directed configuration models. Ze kunnen daar dus een convexe functie houden ook als de cutoff groter is, omdat ze er vanuit gaan dat de indegree een power law volgt, maar de outdegree bounded is, dus het product is altijd kleiner dan $n$. De gevonden scalings zijn dus compleet anders, maar ook voor een ander soort modellen. Dit zou vergelijkbaar zijn met ons MAD framework als we over twee hidden variable verdelingen zouden kunnen optimaliseren: in-weights en out-weights.} Zijn er conjectures of ideeen over het aantal subgraphs in scale-free networks? 
%}

We introduce the hidden-variable model and assumptions on the degree distribution in Section~\ref{sec:model}. We then solve the maximization problem that finds the extremal random graph that generates the maximal subgraph counts in Section~\ref{sec:extreme}. The scaling laws for subgraph counts as function of the network size are presented in Section~\ref{sec:scale}. In Section~\ref{sec:var} we obtain some results for the setting when the variance instead of the MAD is known. In Section~\ref{sec:powerlaw}, we compare this extremal random graph and tight subgraph bounds with existing results for scale-free networks with power-law degrees.
Section~\ref{sec:data} shows subgraph counts and bounds for nine real-world networks. The paper is concluded in Section~\ref{sec:out} with a discussion and outlook.

\section{hidden-variable model}\label{sec:model}
As a random graph model, we employ the hidden-variable model, in which every vertex $i\in[n]$ has a weight $h_i$. Traditionally, one then assumes that the weights $h_1,\ldots,h_n$ are independent and follow some given distribution. In this paper, however, we only specify partial information about the weight (i.e.~degree) distribution. We will assume that for the weights we know the minimal and maximal value, the mean and the mean absolute deviation (MAD). Let $h$ denote a generic weight. 
Then we assume that the weights are 
sampled independently from a degree distribution such that (i) $h=h_i$  has support $\supp(h)=[a,h_c]$ with $-\infty<a\leq h_c<\infty$,  (ii) $\Exp{h}=\mu$ and (iii) $\Exp{|h-\mu |}=d$. This defines the ambiguity set 
\begin{align}\label{eq:psupp}
	&\mathcal{P}(\mu,d)=\nonumber\\
	&\{\pprob:  \supp(h)\subseteq [a,h_c], \Exp{h}=\mu ,\Exp{|h-\mu|}=d\}.
\end{align}
Hence, when we now analyze the hidden-variable model under the assumption that the weight distribution belongs to $\mathcal{P}(\mu,d)$ we perform a distribution-free analysis of the random graph model. 

 We further assume that every pair of vertices is connected independently with probability
\begin{equation}\label{eq:pconpf}
	p(h_i,h_j)=f(h_i h_j/h_s^2).
\end{equation} 
 where $h_s$ is the structural cutoff. This shape of the connection probability ensures that the weight of a vertex is close to its degree~\cite{stegehuis2017b}. The structural cutoff describes the maximal degree of vertices that are not prone to degree-degree correlations~\cite{boguna2004}. As soon as the degree of a vertex becomes larger than the structural cutoff, it is forced to connect to lower degree vertices, as only few high degree vertices can be present while keeping the average degree fixed. The structural cutoff has mainly been investigated for power-law networks, where $h_s\sim\sqrt{\mu n}$~\cite{boguna2004,colomer2012,catanzaro2005}. The natural cutoff describes the constraint on the largest possible network degree. If the objective is to generate uncorrelated networks, this natural cutoff should be smaller than the structural cutoff, as larger vertices experience degree-correlations. Therefore, the natural cutoff is often assumed to be equal to the structural cutoff of $\sqrt{\mu n}$~\cite{catanzaro2005,bianconi2006,chung2002}. In this setting, many network properties can be related to moments of the hidden variable distribution, which makes it possible to investigate subgraph counts~\cite{bianconi2006}, distances~\cite{chung2002} and clustering~\cite{colomer2012}. We henceforth assume the setting where $h_s=h_c$, equal structural and natural cutoff, so that the generated networks are uncorrelated~\cite{boguna2004}.

% \bl{Is het een goed idee om in Sectie 3 alleen Assumption A te gebruiken, en dus ook de drie modellen in terms van $f$ en niet $r$ te introduceren? Vervolgens kunnen we in Sectie 4 Assumption B introduceren, wat A + extra eigenschappen is.} \textcolor{purple}{En ook pas in Sectie 4 de onderstaande modellen introduceren dan? In Sectie 3 wordt in ieder geval B nu niet meer genoemd.} \textcolor{orange}{Of in Sectie 3 de modellen introduceren met $f(x)$ en later in Sectie 4 met Assumption B in $r(x)$?}
 
 For the connection function $f$ we make the following assumption: 
 %We now present two assumptions on $f$ under which we derive the maximal number of subgraphs in the hidden-variable model: 
 \begin{assumption}\label{ass:functionf1}\leavevmode
 	%\begin{enumerate}[label=(\roman*)]
 		$f(x)\geq 0$ is non-decreasing and convex for $x\in[0,1]$. \label{ass:f1}
 	%	\item	$f(x)=xr(x)$ where $r(0)=1$ and $r(x)$ decreases in $x$. \label{ass:f2}
 	%\end{enumerate}
 \end{assumption}

 %\subsection{Examples}
The class of hidden-variable models satisfying Assumption~\ref{ass:functionf1} is very rich. In particular, it contains the following three 
frequently used connection probabilities: The Chung-Lu model~\cite{chung2002,bollobas2007,hofstad2017b} 
\begin{equation}\label{eq:ex1}
f(u)=\min\{u,1\},
\end{equation}
%This is the default choice in \cite{bollobas2007} and leads to the random graphs with the highest possible clustering coefficients under Assumption~\ref{ass:functionf2}~\cite{hofstad2017b}.
the Poisson random graph~\cite{bollobas2007,norros2006,bhamidi2010}
\begin{equation}\label{eq:ex2}
f(u)=1-\textup{e}^{-u}
\end{equation}
and
%with $u$ the intensities of Poisson processes of edges. 
%Ignoring multiple edges, \eqref{eq:ex2} gives the probability that there is an edge between two vertices. 
the generalized random graph 
%maximizes the entropy of the ensemble~\cite{colomer2012} and  
\cite{britton2006,park2004,colomer2012,squartini2011}
\begin{equation}\label{eq:ex3}
f(u)=\frac{u}{1+u}.
\end{equation}
%The resulting random graph is also known as the generalized random graph \cite{britton2006}.
 %However, in many real-world networks as well as in networks generated from power-law degrees, the largest observed degree is often much larger than $\sqrt{n}$. For example, when the degrees of the vertices follow a power-law distribution with degree-exponent $\tau$, the largest degree scales as $n^{1/(\tau-1)}$. When $\tau<3$ this is indeed larger than the structural cutoff. Analyzing networks under such large structural cutoffs is difficult, as these networks contain degree correlations. Still, several network statistics have been analyzed in this setting as well, such as the clique count~\cite{janssen2019}, general subgraph counts~\cite{hofstad2017d}, degree-correlations~\cite{stegehuis2017b} and the largest clique size~\cite{janson2010}. 
 
 %Sometimes networks with even larger structural cutoffs are studied as well. In particular, when the structural cutoff is at least $n$, the generated networks generate more fluctuations in terms of their subgraphs than networks with lower cutoffs~\cite{ostilli2014}. In particular, the number of some subgraphs is non-self averaging under large cutoffs: their variance is larger than its second moment. These phenomena cannot be observed under smaller cutoffs. 
 
\section{Extremal random graph}\label{sec:extreme}
We now investigate the maximal subgraph counts in hidden-variable models over all probability distributions of the weights that satisfy~\eqref{eq:psupp}. 
%We give an explicit form of this maximal probability distribution, that turns out to be the same for all possible subgraphs. 
%We first focus on the setting where $h_s=h_c$, equal structural and natural cutoff, so that the generated networks are uncorrelated~\cite{boguna2004}. In this case, many network statistics can be written as convex functions of the hidden variables, so that the extremal random graph is easily obtained. Indeed, under Assumption~\ref{ass:f1}, the expected number of copies of a specific subgraph $H=(\mathcal{V}_H,\mathcal{E}_H)$, $N_H$, can be written as
%As $f(x)$ is a convex, non-decreasing positive function, a product of these $f$-functions is convex as well. Therefore, the expected number of copies of $H$ can be written as a convex function of the $h$-variables for any subgraph $H$. 
%We investigate the properties of the extremal random graph: 
We thus investigate the number of copies $N_H$ of a given subgraph $H=({V}_H,{E}_H)$ on $k$ vertices. Let the degrees of the vertices in $H$ be denoted by $d_1,d_2,\dots,d_k$. Then, the expected number of copies of $H$ becomes
\begin{align}
&	\mathbb{E}[N_H\mid (h_i)_{i\in[n]}]=\sum_{i_1<i_2<\dots<i_k}\prod_{\{u,v\}\in {E}_H}p(h_{i_u},h_{i_v})\nonumber\\
	& =\frac{1}{\Aut(H)}\sum_{i_1=1}^n\sum_{i_2=1}^n\dots \sum_{i_k=1}^{n}\prod_{\{u,v\}\in {E}_H}f\Big(\frac{h_{i_u}h_{i_v}}{h_s^2}\Big)
	%	& =\frac{1}{\Aut(H)(\mu n)^3}\sum_{i_1=1}^n\sum_{i_2=1}^n\dots \sum_{i_k=1}^{n}h_{i_1}^{d_{1}}h_{i_2}^{d_2}\cdot \dots \cdot h_{i_k}^{d_k}\nonumber\\
	%	& = \frac{1}{\Aut(H)(\mu n)^3}\prod_{j=1}^{k}\left(\sum_{i=1}^nh_i^{d_j}\right),
\end{align}
where $\Aut(H)$ denotes the number of automorphisms of $H$.
Therefore,
\begin{align}\label{eq:ENH}
	\Expp{N_H}&=\frac{n^k}{\Aut(H)}\Expp{\prod_{\{u,v\}\in {E}_H}f\Big(\frac{h_{u}h_{v}}{h_s^2}\Big)},
\end{align}
where $h_i$ denote independent copies of the random variable $h$.

\begin{lemma} Under {\rm Assumption~\ref{ass:functionf1}},
the function $\prod_{\{u,v\}\in \mathcal{E}_H}f(\frac{h_{u}h_{v}}{h_s^2})$ is convex in all $h_i$.
\end{lemma}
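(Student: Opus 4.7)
The plan is to fix an arbitrary coordinate $h_i$ and prove convexity in that single variable while holding the remaining $h_j$ fixed; since the statement asks for convexity in each $h_i$ separately, this reduces the problem to a one-variable question. Split the edge product
\[
\prod_{\{u,v\}\in E_H} f\!\left(\tfrac{h_u h_v}{h_s^2}\right)
\]
into (a) edges with neither endpoint equal to $i$, whose total contribution is a nonnegative constant $C\geq 0$ because $f\geq 0$, and (b) edges $\{i,j\}$ incident to $i$, whose contribution is $\prod_{j:\{i,j\}\in E_H} g_j(h_i)$ with $g_j(h_i) := f(c_j h_i)$ and $c_j := h_j/h_s^2 \geq 0$.

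Each $g_j$ is the composition of the nondecreasing convex function $f$ on $[0,1]$ with a nonnegative linear map, so each $g_j$ is itself nonnegative, nondecreasing, and convex on the relevant interval. The key auxiliary fact I would then prove by induction on the number of factors is: \emph{a finite product of nonnegative, nondecreasing, convex functions on a common interval is again nonnegative, nondecreasing, and convex}. The inductive step reduces to the two-factor case. For nonnegative, nondecreasing, convex $g,h$ and $\lambda\in[0,1]$, multiplying the two scalar convexity inequalities (both sides nonnegative) gives
\[
g(\lambda x{+}(1{-}\lambda)y)\,h(\lambda x{+}(1{-}\lambda)y) \leq [\lambda g(x){+}(1{-}\lambda)g(y)][\lambda h(x){+}(1{-}\lambda)h(y)].
\]
Expanding the right-hand side and subtracting $\lambda g(x)h(x)+(1-\lambda)g(y)h(y)$ produces the cross term $-\lambda(1-\lambda)[g(x)-g(y)][h(x)-h(y)]$, which is nonpositive since $g$ and $h$ are both nondecreasing and hence the two differences share a sign. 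Chaining the two inequalities yields convexity of $gh$; nonnegativity and monotonicity of $gh$ are immediate. Multiplying the resulting convex product of the $g_j$ by the nonnegative constant $C$ preserves convexity, which gives the lemma.

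The main obstacle is the two-factor product lemma above; the rest is bookkeeping. The only subtlety worth flagging is that every argument $h_u h_v/h_s^2$ must lie in $[0,1]$, the region where Assumption~\ref{ass:functionf1} is stated: this is guaranteed by $h_s=h_c$ together with the weights taking values in $[a,h_c]$ with $a\geq 0$, so each $c_j h_i$ stays inside the domain on which $f$ is convex and nondecreasing, and the argument goes through uniformly in $i$.
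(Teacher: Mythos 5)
Your proof is correct and takes essentially the same route as the paper, whose entire argument is the one-line assertion that a product of nonnegative, non-decreasing, convex functions is again convex. You simply supply the details the paper omits: the reduction to convexity in each $h_i$ separately (the right reading, since the product is not jointly convex, e.g.\ for $f(u)=u$), the two-factor product lemma with the cross-term $-\lambda(1-\lambda)[g(x)-g(y)][h(x)-h(y)]$, and the check that the arguments stay in $[0,1]$.
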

\begin{proof}
The function $f$ is convex, non-decreasing and positive on $[0,1]$, so a product of these $f$-functions is also convex. 
\end{proof}

Hence, under the assumptions made for the hidden-variable model, the subgraph count $\Expp{N_H}$ viewed as function of the hidden variables is convex in $(h_1,\ldots,h_n)$. This convexity can be leveraged to employ a method from distributionally robust optimization for maximizing a convex function under certain mean-MAD-range constraints:
\begin{theorem}[Extremal graph model]\label{thm1h}
%Consider  {\rm \eqref{maxphrase}} for $f$ is $h_n$, $f_n$ or $f_\infty$.
Under {\rm Assumption~\ref{ass:functionf1}},
the extremal distribution that solves 
$
\max_{\mathbb{P} \in \mathcal{P}_{(\mu,d)}} \Expp{N_H}$
consists for each $h_i$ of a three-point distribution with values ${a,\mu,h_c}$ 
%\begin{equation}
%\label{eq:BenTal_3points}
%$\tau_1^{(i)} = a_i$, $\tau_2^{(i)}  = \mu_i$, $\tau_3^{(i)}  = b_i $
%\quad \text{for} \quad i=1,\ldots,{n}.
%\end{equation}
and probabilities
\begin{align}\label{eq:3pointp}
	&p_a=\frac{d}{2(\mu-a)},\quad p_\mu=1-\frac{d}{2(\mu-a)}-\frac{d}{2(h_c-\mu)},\nonumber\\
	& p_{h_c}=\frac{d}{2(h_c-\mu)}.
\end{align}
\end{theorem}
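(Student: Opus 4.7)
The plan is to combine the convexity established in the preceding lemma with a classical one-dimensional distributionally robust bound, and then tensorize across the $k$ coordinates. By \eqref{eq:ENH} and independence of the hidden variables,
\begin{equation*}
\Expp{N_H} = \frac{n^k}{\Aut(H)}\int\cdots\int g(x_1,\ldots,x_k)\,d\pprob(x_1)\cdots d\pprob(x_k),
\end{equation*}
with $g(x_1,\ldots,x_k):=\prod_{\{u,v\}\in E_H} f(x_u x_v/h_s^2)$. The lemma tells us $g$ is convex, non-decreasing, and non-negative in each argument separately, so it suffices to show that this $k$-fold integral is maximized over $\mathcal{P}(\mu,d)$ at the three-point measure $\pprob^\star$ described by \eqref{eq:3pointp}.

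First I would establish the one-dimensional claim: for every convex $F:[a,h_c]\to\mathbb{R}$ and every $X$ with $\supp(X)\subseteq[a,h_c]$, $\Exp{X}=\mu$, and $\Exp{|X-\mu|}=d$,
\begin{equation*}
\Exp{F(X)} \le F(\mu) + \frac{d\,[F(h_c)-F(\mu)]}{2(h_c-\mu)} + \frac{d\,[F(a)-F(\mu)]}{2(\mu-a)},
\end{equation*}
with equality attained at $\pprob^\star$. The proof is via convex majorization: the chord inequality $F(x)\le\frac{h_c-x}{h_c-\mu}F(\mu)+\frac{x-\mu}{h_c-\mu}F(h_c)$ on $[\mu,h_c]$ and its analogue on $[a,\mu]$, combined with the identity $\Exp{(X-\mu)_+}=\Exp{(\mu-X)_+}=d/2$ (which follows directly from $\Exp{X}=\mu$ together with $\Exp{|X-\mu|}=d$), yields the displayed bound. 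The probabilities in \eqref{eq:3pointp} are then the unique atomic weights on $\{a,\mu,h_c\}$ satisfying normalization, mean, and MAD, a short linear algebra calculation.

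Second, I would tensorize. For fixed $x_2,\ldots,x_k$, the one-variable bound applied to $x_1\mapsto g(x_1,x_2,\ldots,x_k)$, convex by the lemma, shows that swapping the $x_1$-marginal from $\pprob$ to $\pprob^\star$ does not decrease the integral. The resulting function of $(x_2,\ldots,x_k)$ is a non-negative linear combination of $g(a,\cdot)$, $g(\mu,\cdot)$, and $g(h_c,\cdot)$, each of which is convex in $x_2$ by the lemma, so the one-variable bound applies again to the $x_2$-integration. Iterating over all $k$ coordinates shows that $\pprob^\star$ achieves the supremum in $\mathcal{P}(\mu,d)$.

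The main obstacle is verifying that separate convexity is preserved through each tensorization step. Because $\pprob^\star$ has only three atoms with non-negative weights, the propagation is immediate: a non-negative combination of convex functions is convex, and no deeper exchange or coupling argument is required. A secondary subtlety is that the substitution must preserve the i.i.d.\ structure of the $h_i$, but since the same $\pprob^\star$ is substituted into every coordinate at the end, this is automatic. The remaining content is purely computational, namely the linear-algebra verification that \eqref{eq:3pointp} solves the three moment constraints.
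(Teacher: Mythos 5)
Your proposal is correct and follows essentially the same route as the paper: the univariate mean--MAD--range bound (which the paper imports from Ben-Tal and Hochman, and which you re-derive via the standard chord-majorization argument together with $\Exp{(X-\mu)_+}=\Exp{(\mu-X)_+}=d/2$), followed by coordinate-by-coordinate tensorization that exploits the nonnegativity of the three atomic weights to preserve convexity in the remaining variables. The only difference is that you supply the proof of the one-dimensional lemma rather than citing it, which adds nothing structurally new.
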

Theorem~\ref{thm1h} follows from the the general upper bound in \cite{BenTal1972} on the expectation of a convex function of independent random variables with mean-MAD-range information. The proof of Theorem~\ref{thm1h}  crucially relies on the fact that the solution of the univariate case can be straightforwardly extended to the multivariate case. Consider 
 a univariate convex function $x\mapsto g(x)$ and consider 
 the maximum $\Expp{g(h)}$ with $h$ a random variable on $[a,h_c]$ with mean $\mu$ and MAD $d$. In this univariate setting we can formulate this maximization problem as 
\begin{equation}\label{test3}
\begin{aligned}
&\max_{\pprob(x)\geq0} &  &\int_x g(x){\rm d} \pprob(x)\\
&\text{s.t.} &      & \int_x |x-\mu|{\rm d}\pprob(x)=d, \int_x x{\rm d}\pprob(x)=\mu,\\
&  & & \int_x {\rm d}\pprob(x)=1,   
\end{aligned}
\end{equation}
a semi-infinite linear program with three equality constraints, and as solution for $\pprob=\pprob(x)$ the three-point distribution on $\{a,\mu, h_c\}$. Notice that this solution does not depends on the specific shape of the convex function $g(x)$, which makes the multivariate counterpart of \eqref{test3} equally tractable. 
Take as an example the function $g(h_1,\ldots, h_n):= \Expp{N_H}$, and formulate the maximal subgraph count as 
\begin{equation}\label{test4}
\begin{aligned}
&\max_{\pprob(x)\geq0} &  &\int_x g(x_1,\ldots, x_n){\rm d} \pprob(x_1)\cdots{\rm d}\pprob(x_n)\\
&\text{s.t.} &      & \int_x |x-\mu|{\rm d}\pprob(x)=d, \int_x x{\rm d}\pprob(x)=\mu,\\
& & & \int_x {\rm d}\pprob(x)=1.   
\end{aligned}
\end{equation}
Indeed, to deal with this multivariate case,
we can recursively apply the univariate result. Suppose we first apply this result to $h_1$. Then the worst-case distribution is as in Theorem~\ref{thm1h}, independent of the values for $h_2,\ldots,h_n$. Moreover, the maximal expectation  becomes a convex function in $h_2,\ldots,h_n$, since the extremal probabilities for $h_1$ are nonnegative. Hence, we can apply the result above for the univariate case to $h_2$, and so on, which then establishes Theorem~\ref{thm1h}. 
The theorem thus shows that the extremal random graph for all possible subgraphs and all possible hidden-variable models satisfying Assumption~\ref{ass:functionf1} consists of vertices with only three degrees: $a$, $\mu$ and $h_c=h_s$. Under the canonical choice $h_s=\sqrt{\mu n}$, 
\begin{equation*}
	p_a=\frac{d}{2(\mu-a)},\quad p_\mu\approx1-\frac{d}{2(\mu-a)},\quad p_{\sqrt{\mu n}}\approx\frac{d}{2\sqrt{\mu n}}.
\end{equation*}
As $p_a$ and $p_\mu$ are constant in $n$, the extremal random graph contains $O(n)$ vertices of low degrees $a$ and $\mu$. Furthermore, $p_{\sqrt{\mu n}}$ scales as $1/\sqrt{n}$, creating on average $d\sqrt{n}/(2\mu)$ vertices with degrees as large as $\sqrt{\mu n}$. The connection probability~\eqref{eq:pconpf} shows that $p(h_s,h_s)=f(1)$, which does not depend on $n$. This means that the $\sqrt{\mu n}$ weight vertices form a dense subgraph. The denseness is controlled by the parameter $f(1)$. When $f(1)=1$, these vertices form a clique, and for $f(1)=p<1$, these high-degree vertices form a dense Erd\H{o}s-R\'enyi random graph with probability $p$. On the other hand, vertices with weight $a$ or $\mu$ in the extremal random graph have finite average degree. This shape of extremal random graph is illustrated in Figure~\ref{fig:extremal} for the case $f(1)=1$.

%form on average $d\sqrt{n}/(2\mu)f(a/\sqrt{\mu n})$ connections with weight $\sqrt{\mu n}$ vertices, as there are on average $d\sqrt{n}/(2\mu)$ of such vertices. Therefore, under Assumption~\ref{ass:f2}, vertices with weight $a$ in the extremal random graph form on average $da/(2\mu^{3/2})r(a/\sqrt{\mu n})$ with the high-degree vertices. 
%As $r(x)$ tends to one when $x$ is small, this means that vertices of weight $a$ have a constant number of connections to high-degree vertices. Similar computations show that vertices of degree $a$ have a constant number of connections with weight $a$ and weight$\mu$ vertices. Thus. the low-degree weight vertices h
\begin{figure}[tb]
    \centering
    \centering
    \includegraphics[width=0.6\linewidth]{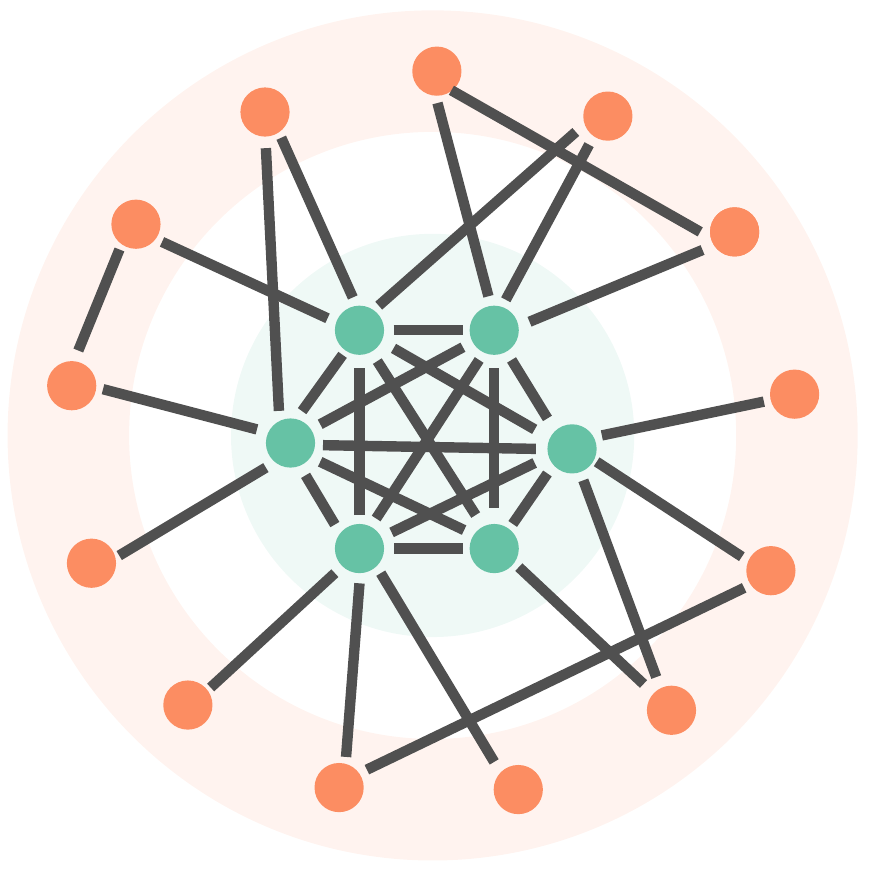}
    \caption{Illustration of the extremal Chung-Lu random graph: a clique of size $O(\sqrt{n})$ (green), and $O(n)$ vertices of weights $\mu$ or $a$ with small expected degree (orange).}
    \label{fig:extremal}
    \end{figure}

\section{Scaling laws for large networks}\label{sec:scale}
 A direct consequence of Theorem \ref{thm1h} is that the tight bound on the subgraph count is obtained by enumerating over all $3^{n}$ permutations of the outcomes $\{a,\mu, h_c\}$ for all weight $h_1,\ldots,h_n$. This gives the following result:
\begin{corollary}\label{corollary1}
Under {\rm Assumption~\ref{ass:functionf1}}, the tight bound on the subgraph count can be expressed as 
\begin{align}\label{eq:NHgen}
	&\max_{\mathbb{P} \in \mathcal{P}_{(\mu,d)}} \Expp{N_H}=\frac{n^k}{\Aut(H)}\nonumber\\
	& \times \sum_{i_1\in\{a,\mu,h_c\}}\!\!\dots \!\!\sum_{i_k\in\{a,\mu,h_c\}}\prod_{j=1}^kp_{i_j}\prod_{\{s,t\}\in E_H}f\Big(\frac{i_si_t}{h_c^2}\Big).
\end{align}
\end{corollary}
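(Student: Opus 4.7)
The plan is to obtain Corollary~\ref{corollary1} by direct substitution of the extremal three-point distribution from Theorem~\ref{thm1h} into the expression~\eqref{eq:ENH} for the subgraph count. Rewrite~\eqref{eq:ENH} as $\Expp{N_H}=\frac{n^k}{\Aut(H)}\Expp{g(h_{1},\ldots,h_{k})}$, where
\begin{equation*}
g(x_1,\ldots,x_k):=\prod_{\{u,v\}\in E_H} f\!\left(\frac{x_u x_v}{h_s^2}\right)
\end{equation*}
and $h_1,\ldots,h_k$ are i.i.d.\ copies of the generic weight $h$. Since the prefactor $n^k/\Aut(H)$ does not depend on $\pprob$, maximizing $\Expp{N_H}$ over $\pprob\in\mathcal{P}(\mu,d)$ reduces to maximizing $\Expp{g(h_1,\ldots,h_k)}$, and the preceding lemma guarantees that $g$ is convex in each coordinate, so Theorem~\ref{thm1h} applies and the maximizer is the product of the three-point distributions on $\{a,\mu,h_c\}$ with probabilities $(p_a,p_\mu,p_{h_c})$ from~\eqref{eq:3pointp}.

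Next I would evaluate $\Expp{g(h_1,\ldots,h_k)}$ under this extremal i.i.d.\ law. Because each $h_j$ is discretely supported on the three atoms $\{a,\mu,h_c\}$ and independence factorizes the joint law, the expectation collapses to the finite sum
\begin{equation*}
\sum_{i_1\in\{a,\mu,h_c\}}\cdots\sum_{i_k\in\{a,\mu,h_c\}}\left(\prod_{j=1}^k p_{i_j}\right)\prod_{\{u,v\}\in E_H} f\!\left(\frac{i_u i_v}{h_s^2}\right).
\end{equation*}
Using the modelling convention $h_s=h_c$ fixed earlier in Section~\ref{sec:model}, substituting this into $\frac{n^k}{\Aut(H)}\Expp{g(h_1,\ldots,h_k)}$ yields exactly~\eqref{eq:NHgen}.

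There is no genuine obstacle here once Theorem~\ref{thm1h} is granted; the corollary is really a representation step converting an abstract worst-case expectation into an explicit enumerable sum. The only cosmetic subtlety to flag is the apparent discrepancy between the introductory sentence (``enumerating over all $3^n$ permutations'') and the $3^k$ terms appearing in the displayed formula: because the integrand in~\eqref{eq:ENH} depends only on the $k$ weights indexed by vertices of $H$, the remaining $n-k$ weights are integrated out trivially via $p_a+p_\mu+p_{h_c}=1$, collapsing $3^n$ configurations to $3^k$. This explicit finite form is what makes the bound tractable for the scaling-law analysis in Section~\ref{sec:scale}.
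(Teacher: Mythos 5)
Your proposal is correct and follows essentially the same route as the paper, which simply presents the corollary as a direct consequence of Theorem~\ref{thm1h} obtained by enumerating the three-point outcomes in the expectation~\eqref{eq:ENH}. Your additional remark resolving the apparent $3^n$ versus $3^k$ discrepancy (the integrand depends only on the $k$ weights of $H$, so the other $n-k$ marginals integrate out via $p_a+p_\mu+p_{h_c}=1$) is accurate and, if anything, slightly more careful than the paper's one-sentence justification.
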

We next show how the maximal subgraph counts scale as function of the network size. To obtain these scaling relations, we employ \eqref{eq:NHgen}
and make an additional assumption next to Assumption~\ref{ass:functionf1} on the connection probabilities:  
 \begin{assumption}\label{ass:functionf2}\leavevmode
 	%\begin{enumerate}[label=(\roman*)]
 	%	$f(x)\geq 0$ is non-decreasing and convex for $x\in[0,1]$. \label{ass:f1}
 	%	\item	
 	$f(x)=xr(x)$ where $r(0)=1$ and $r(x)$ decreases in $x$. \label{ass:f2}
 	%\end{enumerate}
 \end{assumption}
 While Assumption~\ref{ass:functionf2} is a more detailed assumption on the connection probabilities than Assumption~\ref{ass:functionf1}, it still contains the three classical examples of hidden-variable models in  \eqref{eq:ex1}-\eqref{eq:ex3}.

%\bl{kunnen we onderstaande afleiding nog toegankelijker maken, meer intuitie} \textcolor{purple}{Heb er nu wat meer uitleg bij gezet, maar wat mij betreft is deze afleiding ook iets wat in de appendix zou kunnen, met de redenatie nog hier: we kijken welke van de drie termen domineert: $a$, $\mu$ of $h_c$ en dat geeft $h_c$ als de graad van een punt teniminste 2 is, anders geven de termen allemaal een vergelijkbare bijdrage in $n$. Dan de kunnn we de berekeningen in de appendix neerzetten, omdat die toch niet heel inzichtelijk/interessant zijn.}

We now investigate the behavior of~\eqref{eq:NHgen} when $f$ satisfies Assumption~\ref{ass:functionf2}. In that case, when $h_s=h_c$,
\begin{align}\label{eq:ENHsplit}
	& \Expp{N_H}=\frac{n^k}{\Aut(H)}\nonumber\\
	& \times \sum_{i_1\in\{a,\mu,h_c\}}\!\!\dots \!\! \sum_{i_k\in\{a,\mu,h_c\}}\prod_{j=1}^kp_{i_j}\prod_{\{s,t\}\in E_H}\frac{i_si_t}{h_c^2}r\Big(\frac{i_si_t}{h_c^2}\Big)\nonumber\\
	& =\frac{n^k}{\Aut(H)h_s^{2E_H}}\nonumber\\
	& \times \!\!\! \sum_{i_1\in\{a,\mu,h_c\}}\!\!\dots\!\! \sum_{i_k\in\{a,\mu,h_c\}}\prod_{j=1}^kp_{i_j}i_j^{d_j}\!\! \prod_{\{s,t\}\in E_H}\!\! r\Big(\frac{i_si_t}{h_c^2}\Big).
\end{align}
We show in Appendix~\ref{app:scalingMAD} that for every vertex $j$ with $d_j\geq 2$, the summation over $i_j\in\{a,\mu,h_c\}$ is dominated by the term containing $h_c$, so that the other terms may be ignored. 
Therefore, for subgraphs with minimal degree at least 2, we can ignore the terms in~\eqref{eq:ENHsplit} with $i_j=a$ or $i_j=\mu$, yielding 
\begin{align}
	\Expp{N_H}
	& \sim \frac{n^k}{\Aut(H)h_s^{2E_H}}\prod_{j=1}^k\frac{d}{2(h_c-\mu)}h_c^{d_j}\prod_{\{s,t\}\in E_H}r(1)\nonumber\\
	& =\frac{n^kd^kh_c^{2E_H-k}}{h_c^{2E_H}2^k\Aut(H)}r(1)^{E_H}.
\end{align}

When $d_j=1$, the contributions of $i_j=a,\mu,h_c$ in~\eqref{eq:ENHsplit} are of similar order of magnitude. In Appendix~\ref{app:scalingMAD} we show that this yields the following result:

\begin{theorem}[Core structure]\label{thm:subgraphsmad}
	%Let $p(h,h')=f(h h'/h_s^2)$, where $f(x)\geq 0$ is non-decreasing and convex for $x\in[0,h_s]$. Furthermore, 
	Let $H=(V_H,E_H)$ be a connected subgraph on $k$ vertices and make {\rm Assumption~\ref{ass:functionf1}}~and~{\rm Assumption \ref{ass:functionf2}}. 
	\begin{itemize}[label=(\roman*)]
%		\item  under Assumption~\ref{ass:functionf}\ref{ass:f1},
%	\begin{equation}
%		\max_{\pprob\in\mathcal{P}(\mu,d)}\Expp{N_H}=\frac{n^k}{\Aut(H)}\sum_{i_1\in\{a,\mu,h_c\}}\dots \sum_{i_k\in\{a,\mu,h_c\}}\prod_{j=1}^kp_{i_j}\prod_{\{s,t\}\in E_H}f\Big(\frac{i_si_t}{h_s^2}\Big),
%	\end{equation}
%	with 
%	\begin{equation}\label{eq:3pointthm}
%	    p_a=\frac{d}{2(\mu-a)}, p_\mu=1-\frac{d}{2(\mu-a)}-\frac{d}{2(h_c-\mu)}, p_{h_c}=\frac{d}{2(h_c-\mu)}.
	   % \end{equation}
	\item[{\rm (i)}]
	When $d_H>1$ for all $v\in V_H$, $h_s=h_c$ and $h_c\to\infty$ as $n\to\infty$, 
	\begin{equation}\label{eq:ENHasymptotic}
		\frac{\max_{\pprob\in\mathcal{P}(\mu,d)}\Expp{N_H}}{n^kh_c^{-k}}\to\frac{d^k}{2^k\textup{Aut}(H)}r(1)^{|E_H|}.
	\end{equation}
\item[{\rm (ii)}] 
 When $k\geq 3$ and $h_c\to\infty$ and $h_s=h_c$ as $n\to\infty$, 
\begin{align}\label{eq:ENHasymptoticdeg1}
	& \frac{\max_{\pprob\in\mathcal{P}(\mu,d)}\Expp{N_H}}{n^kh_c^{-k}}\nonumber\\
	& \to\frac{d^{k-n_1}}{\Aut(H)2^{k-n_1}}\left(\frac{d}{2}(r(1)-1)+\mu\right)^{n_1}r(1)^{E_H-n_1},
	\end{align}
\end{itemize}
where $n_1$ denotes the number of degree-1 vertices in $H$.
\end{theorem}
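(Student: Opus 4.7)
The plan is to work directly from the exact expression in Corollary~\ref{corollary1}, combined with Assumption~\ref{ass:functionf2}, which yields the factorised form~\eqref{eq:ENHsplit}. The key observation is that each factor $p_{i_j} i_j^{d_j}$ has an $h_c$-scaling dictated by the degree: using $p_{h_c}=d/(2(h_c-\mu))\sim d/(2h_c)$, one has $p_{h_c} h_c^{d_j}\sim (d/2)\,h_c^{d_j-1}$, whereas $p_\mu\mu^{d_j}$ and $p_a a^{d_j}$ remain bounded as $h_c\to\infty$. Hence for any vertex with $d_j\ge 2$ the choice $i_j=h_c$ strictly dominates the other two choices (by a factor $h_c^{d_j-1}\to\infty$), while for $d_j=1$ all three contributions are of the same order $O(1)$. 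A uniform bound, using that $r$ is non-negative, non-increasing and $r(0)=1$, then controls the error terms, since the $r$-factors are sandwiched between $r(1)$ and $1$.

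For part~(i), every $v\in V_H$ has degree $\ge 2$, so in the asymptotic leading term every $i_j$ is pinned at $h_c$. Each edge factor then equals $r(h_c\cdot h_c/h_c^2)=r(1)$, producing $r(1)^{|E_H|}$; multiplying $\prod_j p_{h_c} h_c^{d_j}$ and using the handshake identity $\sum_j d_j = 2|E_H|$ gives $(d/2)^k\, h_c^{2|E_H|-k}$, which after dividing by $h_s^{2|E_H|}=h_c^{2|E_H|}$ and by $\mathrm{Aut}(H)$ yields exactly~\eqref{eq:ENHasymptotic}.

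For part~(ii), connectedness of $H$ together with $k\ge 3$ forces each leaf to have a unique neighbour of degree $\ge 2$, which is fixed at $h_c$ in the leading term; in particular no two leaves are adjacent. Thus for every edge $\{s,t\}$ with $t$ a leaf we have $i_s=h_c$ and $r(i_s i_t/h_c^2)=r(i_t/h_c)\to 1$ unless $i_t=h_c$ in which case it equals $r(1)$. Edges between two non-leaves always contribute $r(1)$. Consequently each leaf contributes the scalar
\[
p_a a + p_\mu \mu + p_{h_c} h_c\cdot r(1).
\]
Exploiting the mean constraint $\mu=p_a a+p_\mu\mu+p_{h_c} h_c$ together with $p_{h_c} h_c\to d/2$ gives $p_a a+p_\mu\mu\to\mu - d/2$, so the per-leaf factor tends to $\mu+(d/2)(r(1)-1)$. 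The $k-n_1$ non-leaves contribute $\prod_j (d/2) h_c^{d_j-1}$ and the remaining $|E_H|-n_1$ edges contribute $r(1)$ each; collecting the powers of $h_c$ via $\sum_{j:\,d_j\ge 2}(d_j-1) = 2|E_H|-n_1-(k-n_1) = 2|E_H|-k$ and dividing by $h_c^{2|E_H|}$ produces the claimed scaling~\eqref{eq:ENHasymptoticdeg1}.

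The main obstacle is the bookkeeping required to certify that the terms with at least one non-leaf assigned to $a$ or $\mu$ are genuinely negligible: one must show that the polynomial deficit $h_c^{-(d_j-1)}$ they suffer cannot be compensated by the $r$-factors or by cross-effects from other vertices. This is handled by bounding all surviving $r$-values uniformly by $1$ and tracking the net power of $h_c$ in each summand of~\eqref{eq:ENHsplit}; once this uniform control is in place, both limits follow by the computations sketched above. The calculation is routine but deserves the explicit appendix treatment that the authors defer to Appendix~\ref{app:scalingMAD}.
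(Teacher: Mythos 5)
Your proposal is correct and follows essentially the same route as the paper's Appendix~\ref{app:scalingMAD}: term-by-term comparison of the $a$, $\mu$, $h_c$ contributions per vertex, dominance of $i_j=h_c$ for $d_j\geq 2$, uniform control of the $r$-factors between $r(1)$ and $1$, and the handshake identity for the power counting. Your use of the mean constraint $p_a a+p_\mu\mu+p_{h_c}h_c=\mu$ to obtain the per-leaf constant $\mu+\tfrac{d}{2}(r(1)-1)$ is a slightly cleaner route to the same algebra the paper carries out explicitly.
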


This theorem shows subgraphs with minimal degree at least 2 predominantly appear in the core of the extremal random graph containing all vertices of weight $h_c$. Other subgraphs that contain degree-one vertices, asymptotically have their degree-one vertices everywhere in the extremal random graph, while the other subgraph vertices typically still only appear in the core of the extremal random graph, see Figure~\ref{fig:extremalsubgraph}. 

Furthermore,~\eqref{eq:ENHasymptoticdeg1} reveals the interesting property that the scaling in $n$ and $h_c$ is the same for all possible subgraphs of size $k$. The effect of the precise subgraph structure is only visible in the leading order constant. Also, if we compare all subgraphs of a given size $k$ with minimal degree at least 2,~\eqref{eq:ENHasymptotic} shows that this leading order constant only depends on the subgraph through its number of automorphisms. Therefore, we can easily order all such subgraphs based on the maximal number of times they appear in any hidden-variable model. For example, among all subgraphs of a given size, cliques have the largest number of automorphisms. Thus, $k$-cliques appear the least often among all subgraphs of size $k$ in the extremal random graph, and therefore have the lowest upper bound on their count among all size-$k$ subgraphs. The rest of the ordering in terms of maximal number of subgraphs, is only determined by the number of subgraph automorphisms in decreasing order.  

\begin{figure}[tb]
    \centering
    \includegraphics[width=0.6\linewidth]{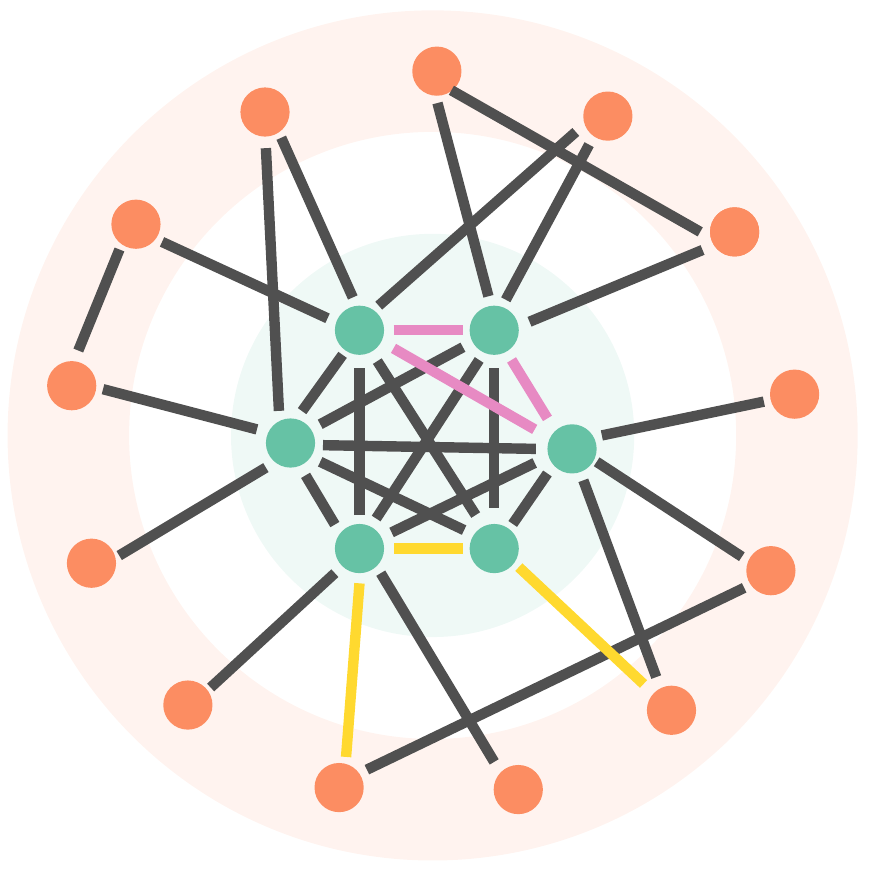}
    \caption{Subgraphs with minimal degree at least 2 asymptotically appear on the $\sqrt{n\mu }$ degree vertices (pink triangle), while degree-1 vertices inside a subgraph typically appear at the degree-$\mu$, $a$ or $h_c$ vertices (yellow 4-path).}
    \label{fig:extremalsubgraph}
\end{figure}

When degree-one vertices are contained in the subgraph, \eqref{eq:ENHasymptoticdeg1} shows that a combination of the number of subgraph automorphisms, a term containing the fixed parameters $d$ and $\mu$, the model-specific term $r(1)$ and the number of degree-one vertices in the subgraph. When we take the specific case $r(1)=1$ and $h_s=h_c=\sqrt{\mu n}$,~\eqref{eq:ENHasymptoticdeg1} simplifies to
\begin{align}\label{eq:ENHcl}
	\Expp{N_H}
%	& \approx \frac{n^k}{\Aut(H)(\mu n)^{E_H}}\prod_{j=1}^k\frac{d\sqrt{\mu n}^{d_j}}{2\sqrt{\mu n}}\\
%	& \approx \frac{n^{k-E_H}d^k}{\Aut(H)2^k\mu^{E_H}}\prod_{j=1}^k(\mu n)^{d_j/2-1/2}\nonumber\\
%	& =  \frac{n^{k-E_H}d^k}{\Aut(H)2^k\mu^{E_H}}(\mu n)^{E_H-k/2}\nonumber\\
	& \sim \frac{n^{k/2}d^{k-n_1}}{\Aut(H)2^{k-n_1}\mu^{k/2-n_1}}.
\end{align}

Equation~\eqref{eq:ENHcl} shows that for all subgraphs of size $k$ the only effect of the graph structure of $H$ is determined by the number of automorphisms of $H$ and the constant $(2\mu/d)^{n_1}$. Figure~\ref{fig:motif4} illustrates the ordering in terms of maximal frequency of all subgraphs of size 4 by showing the leading order constant of~\eqref{eq:ENHcl}. A lower constant in Figure~\ref{fig:motif4} therefore indicates that this particular subgraph appears less frequently than a subgraph with a higher constant. Thus, the clique is the least frequently occurring subgraph. The most frequently appearing subgraph depends on the constant $c=2\mu /d>1$, and is either the path or the claw. A similar ordering for size 5 subgraphs is provided in Appendix~\ref{sec:motif5order}.

\begin{figure}[tb]
    \centering
    \includegraphics[width=0.48\textwidth]{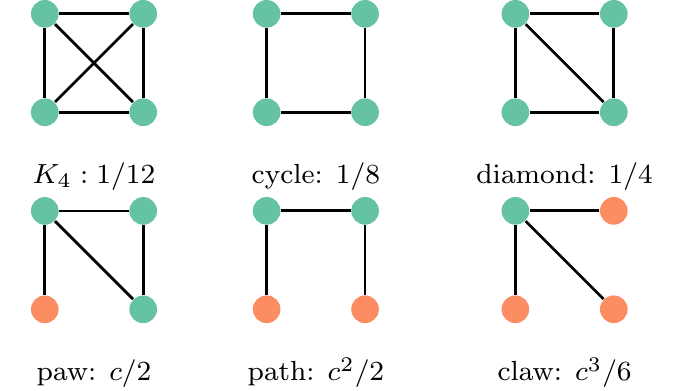}
    \caption{The leading constant $c^{n_1}/\Aut(H)$ for the maximal scaling of the number of subgraphs in $n$ for all subgraphs on 4 vertices, where $c=2\mu/d$. }
    \label{fig:motif4}
\end{figure}

We now apply Theorem~\ref{thm:subgraphsmad} to the three classical hidden-variable models mentioned in Section~\ref{sec:model}. In particular, we show that  the Chung-Lu model can generate the largest amount of cliques. 
\begin{proposition}
    Suppose $h_s=h_c=\sqrt{\mu n}$. The Chung-Lu model can generate the maximal number of cliques among all connection probabilities that fall under {\rm Assumption~\ref{ass:functionf1}}, scaling as
    \begin{align}\label{eq:EKkcl}
	\frac{\max_{\pprob\in\mathcal{P}(\mu,d)}\Expp{N_{K_k}}}{n^{k/2}}& \to\frac{d^k}{k!2^k\mu^{k/2}},
    \end{align}
        Furthermore, all three classical hidden-variable models \eqref{eq:ex1}-\eqref{eq:ex3} satisfy
    \begin{equation}
        \frac{\max_{\pprob\in\mathcal{P}(\mu,d)}\Expp{N_{K_k}}}{n^{k/2}}\to \frac{d^kr(1)^{k(k-1)/2}}{k!2^k\mu^{k/2}}.
    \end{equation}
\end{proposition}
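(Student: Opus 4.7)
The plan is to deduce both statements from Theorem~\ref{thm:subgraphsmad}(i) applied to $H=K_k$, together with a pointwise comparison of the connection functions $f$.

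First I would specialize the asymptotic formula~\eqref{eq:ENHasymptotic} to the clique. For $k\ge 3$ every vertex of $K_k$ has degree $k-1\ge 2$, so Theorem~\ref{thm:subgraphsmad}(i) applies with $|E_{K_k}|=k(k-1)/2$ and $\Aut(K_k)=k!$. Using $h_c=h_s=\sqrt{\mu n}$, so that $n^k h_c^{-k}=n^{k/2}\mu^{-k/2}$, substitution directly gives
\begin{equation*}
\frac{\max_{\pprob\in\mathcal{P}(\mu,d)}\Expp{N_{K_k}}}{n^{k/2}}\to \frac{d^k\, r(1)^{k(k-1)/2}}{k!\,2^k\mu^{k/2}},
\end{equation*}
which is the second displayed equation of the proposition. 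For the three classical examples one checks $r_{\textup{CL}}(1)=1$ (Chung-Lu), $r_{\textup{Poi}}(1)=1-\textup{e}^{-1}$ (Poisson), and $r_{\textup{GRG}}(1)=1/2$ (generalized random graph), producing the corresponding explicit constants.

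Next I would establish the Chung-Lu maximality. For any $f$ satisfying Assumption~\ref{ass:functionf1}, treated as a valid connection probability so that $f(0)=0$ and $f(1)\le 1$, the convexity of $f$ on $[0,1]$ gives
\begin{equation*}
f(u)=f\bigl(u\cdot 1+(1-u)\cdot 0\bigr)\le u f(1)+(1-u)f(0)\le u,
\end{equation*}
i.e.\ $f\le f_{\textup{CL}}$ pointwise on $[0,1]$. Since each argument $h_u h_v/h_s^2$ appearing in~\eqref{eq:ENH} lies in $[0,1]$ (because $h_s=h_c$), this pointwise bound propagates through the product and through the expectation, yielding $\Expp{N_{K_k}}\le \Expp{N_{K_k}}^{\textup{CL}}$ for every weight distribution, and hence $\max_\pprob \Expp{N_{K_k}}\le \max_\pprob \Expp{N_{K_k}}^{\textup{CL}}$. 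Combining this with the first step specialized to $r(1)=1$ shows that the Chung-Lu value $d^k/(k!\,2^k\mu^{k/2})$ is simultaneously an upper bound over Assumption~\ref{ass:functionf1} and an asymptotically achieved value, proving the first displayed limit.

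The main obstacle is minor: Assumption~\ref{ass:functionf1} as written does not explicitly enforce $f(0)=0$ or $f(1)\le 1$, yet both are needed to close the pointwise chord argument. These are natural probability constraints implicit throughout the paper, so I would state them explicitly at the start of the proof; alternatively one can keep $f(0)\ge 0$ general and absorb the $(1-u)f(0)$ term using the explicit three-point extremal distribution from Theorem~\ref{thm1h}, where only the weight $h_c$ contributes to leading order for minimum-degree-$\ge 2$ subgraphs, so the $f(0)$ term drops out of the scaling.
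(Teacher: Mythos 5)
Your derivation of the second display is exactly the paper's: specialize Theorem~\ref{thm:subgraphsmad}(i) to $K_k$ with $\Aut(K_k)=k!$, $|E_{K_k}|=k(k-1)/2$ and $h_c=\sqrt{\mu n}$, then read off the three classical constants from $r(1)$. Where you diverge is the Chung-Lu maximality step. The paper simply observes that the limiting constant in \eqref{eq:Kkexp} is increasing in $r(1)$ and that $r\equiv 1$ is the largest admissible $r$ (since $r$ is decreasing with $r(0)=1$ under Assumption~\ref{ass:functionf2}), so Chung-Lu maximizes the leading constant. You instead prove the pointwise domination $f(u)\le u f(1)+(1-u)f(0)\le u=f_{\textup{CL}}(u)$ on $[0,1]$ via convexity and push it through the product and the expectation. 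Your route is non-asymptotic (it bounds $\Expp{N_{K_k}}$ for every $n$ and every weight distribution, not just the limit of the maximum) and does not need the factorization $f(x)=xr(x)$ for the comparison step, so it genuinely covers ``all connection probabilities under Assumption~\ref{ass:functionf1}'' as the proposition claims, whereas the paper's comparison implicitly lives inside the class of Assumption~\ref{ass:functionf2}. The price is the extra hypotheses $f(0)=0$ and $f(1)\le 1$, which you correctly flag; these are forced anyway once Assumption~\ref{ass:functionf2} is imposed (as it must be for the limit formula to hold at all), and they are implicit in $f$ being a probability, so this is a presentational gap shared with the paper rather than a flaw in your argument. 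Both approaches are valid; yours buys a cleaner and slightly stronger monotonicity statement at essentially no extra cost.
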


\begin{proof}
When taking $h_s=h_c=\sqrt{\mu n}$, for all connection probabilities satisfying Assumption~\ref{ass:functionf1},~\eqref{eq:ENHasymptotic} gives for the expected number of $k$-cliques 
\begin{equation}\label{eq:Kkexp}
	\max_{\pprob\in\mathcal{P}(\mu,d)}\Expp{N_{K_k}}\approx \frac{d^kn^{k/2}}{k!2^k\mu^{k/2}}r(1)^{k(k-1)/2},
\end{equation}
proving the second part of the statement. Furthermore, from~\eqref{eq:Kkexp} it is not difficult to see that the Chung-Lu model can generate the maximal number of cliques among all connection probabilities that fall under Assumption~\ref{ass:functionf1}. Indeed, the function $r(x)=1$ is the maximal possible function under Assumption~\ref{ass:functionf1}, and~\eqref{eq:Kkexp} is increasing in $r(1)$, proving the first part of the statement.
\end{proof}

\section{Fixing variance instead of MAD}\label{sec:var}
We have shown that under mean-MAD-range information, the search for the extremal random graph that maximizes subgraph counts was tantamount to solving the optimization problem in \eqref{test4}. The three-point solution of the $1$-dimensional problem in \eqref{test3} carried over directly to the $n$-dimensional problem in \eqref{test4}. We now ask what happens when we replace the MAD information with variance information. 

% \bl{Ik probeer hier uit leggen waarom variantie niet werkt, maar weet niet zeker of dit duidelijke genoeg is.} \textcolor{purple}{ We zouden dit nog specifieker kunnen maken, en kunnen zeggen hoe dit relateert tot hidden variables? Dat het dan zou betekenen dat iedere hidden-variable in de subgraph zijn eigen optimale verdeling heeft die van de anderen afhangt? En het daardoor dus geen `echt' hidden-variable model zou zijn, omdat daar alle nodes dezelfde weight verdeling moeten hebben?}
% \textcolor{orange}{Weet niet zeker of dit zo is. Denk dat voor iedere vaste $n$ er wel een extremal distribution die voor alle nodes geldt, alleen dat deze verdeling uitsluitend numeriek te vinden is en voor iedere $n$ weer een andere oplossing heeft. Dat betekent overigens inderdaad dat het niet een echt hidden-variable model is, omdat we daar de degree verdeling niet willen laten afhangen van $n$. }
We then first solve the 1-dimensional counterpart of \eqref{test3}, but now optimizing over all distributions with known range, mean and variance (instead of MAD). As it turns out, the solution (the extremal distribution) will depend on the convex function $g(x)$, which has severe consequences for the multivariate case, i.e., when we consider $g(h_1,\ldots,h_n)$. In that case, 
the extremal distribution depends on the values of $h_2,\ldots,h_n$, and calculating (in closed form) the extremal distribution as a function of $h_2,\dots,h_n$ seems to be impossible. This means that for each $n$, a new $n$-dimensional problem needs to be solved, implying that the extremal random graph is different for every value of $n$. 

We now show how the key result for mean-MAD ambiguity, Theorem~\ref{thm1h}, can be used to obtain results for mean-variance ambiguity. In general, MAD and variance are related as \cite{BenTal1985,eekelen2019}
$$
d_{\min}:=\frac{2 \sigma^2}{h_c-a}\leq d\leq \sigma=:d_{\max}. 
$$
Let $\mathcal{P}^*_{(\mu,\sigma)}$ denote the ambiguity set that contains all distributions with known range, mean and variance, i.e.,
\begin{align*}
& \mathcal{P}^*_{(\mu,\sigma)} = \nonumber\\
& \left\{ \mathbb{P}: \ \text{supp}(X) \subseteq [a,h_c], \ \mathbb{E}_{\mathbb{P}}(X) = \mu, \ \mathbb{E}_{\mathbb{P}}( X - \mu )^2 = \sigma^2\right\}. 
\end{align*}
%Hence, $$
%\max_{\mathbb{P} \in \mathcal{P}^*_{(\mu,\sigma)}} \Expp{g(h_1,\ldots,h_n)}=
%\max_{\mathbb{P} \in \mathcal{P}_{(\mu,d^*)}} \Expp{g(h_1,\ldots,h_n)}$$
%for some $d^*\in [2\sigma^2/(b-a),\sigma]$. 
Since $\max_{\mathbb{P} \in \mathcal{P}_{(\mu,d)}} \Expp{g(h_1,\ldots,h_n)}$ is non-decreasing in $d$, see \cite{eekelen2019}, we obtain for fixed $\sigma$ the bounds
\begin{align}\label{cvb}
&\max\limits_{\mathbb{P} \in \mathcal{P}_{(\mu,d_{\rm min})}} \Expp{g(h_1,\ldots,h_n)}\leq \max\limits_{\mathbb{P} \in \mathcal{P}^*_{(\mu,\sigma)}} \Expp{g(h_1,\ldots,h_n)}\nonumber\\
& \leq \max\limits_{\mathbb{P} \in \mathcal{P}_{(\mu,d_{\rm max})}} \Expp{g(h_1,\ldots,h_n)}.
\end{align}
Thus, when fixing the variance at $\sigma^2$, we need to consider the range of $d\in[2\sigma^2/(h_c-a),\sigma]$. 
For the extremal random graph of~\eqref{eq:3pointp}, 
\begin{align}
    \sigma^2&=\frac{da^2}{2(\mu-a)}-\frac{d\mu^2}{2(\mu-a)}-\frac{d\mu^2}{2(h_c-\mu)}+\frac{dh_c^2}{2(h_c-\mu)}\nonumber\\
    & = d(h_c-a)/2.
\end{align}
Thus, the lower bound, $d_{\min}=2\sigma^2/(h_c-a)$ ensures that the extremal graph has the desired variance $\sigma^2$.
In this section we therefore set $d=2\sigma^2/(h_c-a)$. Then, the MAD extremal random graph with this value of $d$ gives the desired variance, and provides a lower bound on the maximal number of subgraphs with given variance. In particular, we are interested in the setting where $\sigma^2\to \infty$ as $n\to\infty$ but $\sigma^2/h_c\to 0$, similar to observations in many real-world networks. 

In Appendix~\ref{app:MADvar}, we again consider~\eqref{eq:ENHsplit}, and investigate which of the three terms in the summation ($a$, $\mu$ or $h_c$) has the dominant contribution for large $n$. The dominating terms are slightly different compared to the fixed $d$ setting: when $d_j\geq 3$, the contribution from $i_j=h_c$ dominates, whereas for $d_j=2$, the contribution from $i_j=h_c$ and from $i_j=\mu$ have the same order of magnitude. Finally, when $d_j=1$,  the contribution to~\eqref{eq:ENHsplit} from $i_j=\mu$ dominates the other contributions. Appendix~\ref{app:MADvar} shows that this gives the following result:

\begin{theorem}[Diminishing $d$]\label{thm:maxNhsigma}
	
	%Let $p(h,h')=f(h h'/h_s^2)$, where $f(x)\geq 0$ is non-decreasing and convex for $x\in[0,h_s]$. Furthermore, 
	Let $H=(V_H,E_H)$ be a connected subgraph on $k$ vertices. Let $d=2\sigma^2/(h_c-a)$ where $h_s=h_c\to\infty$ and $\sigma^2/h_c\to 0$. Then, under {\rm Assumption~\ref{ass:functionf1}}~and~{\rm \ref{ass:functionf2}},  
		\begin{align}\label{eq:NHdsmall}
			& \frac{\max_{\pprob\in\mathcal{P}(\mu,d)}\Expp{N_H}}{n^kh_c^{-2k+n_1}} \to  \frac{r(1)^{E_{\geq 3,\geq 3}}}{\Aut(H)}\left(\sigma^2r(1)+\mu^2\right)^{n_{2,1}}\nonumber\\
			& \times \left(\sigma^2r(1)^2+\mu^2\right)^{n_2-n_{2,1}}\mu^{n_1}\sigma^{2n_{\geq 3}} ,
		\end{align}
	where $n_i$ and $n_{\geq i}$ denote the number of vertices of degree $i$ or degree at least $i$ in $H$, and $E_{\geq 3,\geq 3}$ denotes the number of edges between vertices of degree at least 3 in $H$. Furthermore, $n_{2,1}$ denotes the number of degree-2 vertices in $H$ that are connected to a degree-1 vertex.
\end{theorem}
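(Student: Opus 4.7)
The plan is to start from the exact three-point representation of Corollary~\ref{corollary1} and analyze the $3^k$-fold sum in~\eqref{eq:ENHsplit} in the regime $h_c\to\infty$, $\sigma^2/h_c\to 0$. Substituting $d=2\sigma^2/(h_c-a)$ into~\eqref{eq:3pointp} gives
\[
p_a\sim\frac{\sigma^2}{h_c(\mu-a)},\qquad p_\mu\to 1,\qquad p_{h_c}\sim\frac{\sigma^2}{h_c^2},
\]
so a vertex $j$ of degree $d_j$ contributes $p_{i_j}i_j^{d_j}$ of order $a^{d_j}\sigma^2/h_c$, $\mu^{d_j}$, or $\sigma^2 h_c^{d_j-2}$ when $i_j=a,\mu,h_c$, respectively. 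The $a$-choice is always subdominant, the $h_c$-choice dominates when $d_j\geq 3$, the $\mu$-choice dominates when $d_j=1$, and the two remaining choices are of comparable leading order when $d_j=2$.

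Next I would invoke Assumption~\ref{ass:functionf2} to handle the edge factors: since $r$ is continuous with $r(0)=1$, an edge $\{s,t\}$ contributes $r(1)$ when both endpoints sit at $h_c$ and $1+o(1)$ otherwise. Combined with the per-vertex analysis, degree-$1$ vertices concentrate at $\mu$ and degree-$\geq 3$ vertices at $h_c$, producing factors $\mu^{n_1}$, $\sigma^{2n_{\geq 3}}$, and $r(1)^{E_{\geq 3,\geq 3}}$. The leading-order sum thus reduces to a $2^{n_2}$-fold sum over the placements of the degree-$2$ vertices in $\{\mu,h_c\}$. For a degree-$2$ vertex $v$, placement at $\mu$ contributes $\mu^2$ and placement at $h_c$ contributes $\sigma^2 r(1)^{e_v}$, where $e_v$ is the number of incident edges picking up a nontrivial $r(1)$ factor, i.e.\ the number of $h_c$-placed neighbors of $v$. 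In the generic case where $v$'s neighbors are either degree-$1$ (forced to $\mu$) or degree-$\geq 3$ (forced to $h_c$), one has $e_v\in\{1,2\}$ according to whether $v$ has a degree-$1$ neighbor, and summing over $i_v$ produces the dichotomy $(\sigma^2r(1)+\mu^2)^{n_{2,1}}(\sigma^2r(1)^2+\mu^2)^{n_2-n_{2,1}}$. Multiplying by the prefactor $n^k/(\Aut(H)h_c^{2E_H})$ and using $\sum_j d_j=2E_H$, together with the $h_c$-powers pulled out by the degree-$\geq 3$ vertices, yields the overall scaling $n^k h_c^{-2k+n_1}$ and reproduces~\eqref{eq:NHdsmall}.

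The main obstacle is the bookkeeping for degree-$2$ vertices whose neighbors are \emph{not} cleanly of the forced types above: edges between two adjacent degree-$2$ vertices couple the per-vertex sums, and the closed-form product in~\eqref{eq:NHdsmall} relies on this coupling collapsing. I would handle this by expanding the $2^{n_2}$-fold sum via conditioning on which subset of degree-$2$ vertices lands at $h_c$, evaluating the $r$-factors on each coupling edge explicitly, and checking that the joint contribution in the asymptotic regime $\sigma^2/h_c\to 0$ telescopes into the stated product, with the remaining cross-terms absorbed as lower-order corrections. I expect this combinatorial verification, rather than the per-vertex asymptotics, to be the delicate step of the proof.
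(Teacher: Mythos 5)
Your proposal follows essentially the same route as the paper's proof in Appendix~\ref{app:MADvar}: the same per-vertex dominance analysis of the three-point sum (the $a$-term always subdominant, the $h_c$-term dominant for $d_j\geq 3$, the $\mu$-term dominant for $d_j=1$, and both retained for $d_j=2$), the same $1+o(1)$ versus $r(1)$ treatment of the edge factors, and the same assembly into the stated product and the scaling $n^k h_c^{-2k+n_1}$. The only real difference is that you explicitly flag the coupling of adjacent degree-$2$ vertices as the delicate step requiring a separate combinatorial check, whereas the paper simply writes the factorized product $\prod_{j:d_j=2}\bigl(\sigma^2 r(1)^{n_{j,\geq 3}}+\mu^2\bigr)$ without addressing that coupling — so your version is, if anything, the more careful one.
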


This theorem again shows that in the extremal random graph with $d=2\sigma^2/(h_c-a)$ subgraphs counts are dominated by subgraph counts in specific formations of the extremal random graph. For example, subgraphs with minimal degree at least three almost exclusively appear in the core of $\sqrt{\mu n}$ vertices. Degree-one vertices in a subgraph on the other hand, typically appear at the vertices of degree $\mu$.

As an example, take the Chung-Lu model with $h_s=\sqrt{\mu n}$, $a=1$ and $p(h,h')=\min(hh'/(\mu n),1)$. Then \eqref{eq:NHdsmall} gives for $k>3$,
\begin{align}\label{eq:EKkvar}
	\Expp{N_{K_3}}
	& \approx \frac{1}{6\mu^{3}}\left(\mu^2+\sigma^2\right)^3,\nonumber\\
	%&=\frac{n^k}{k!(\mu n)^{k(k-1)/2}}\mathbb{E}[H^{k-1}]^k\nonumber\\
	%& = \frac{n^k}{k!(\mu n)^{k(k-1)/2}}\left(\frac{d}{2(\mu-1)}+\left(1-\frac{d}{2(\mu-1)}-\frac{d}{2(\sqrt{\mu n}-\mu)}\right)\mu^{k-1}+\frac{d\sqrt{\mu n}^{k-1}}{2(\sqrt{\mu n}-\mu)}\right)^k\nonumber\\
%	& \approx \frac{n^k}{k!(\mu n)^{k(k-1)/2}}\left(\frac{d\sqrt{\mu n}^{k-1}}{2\sqrt{\mu n}}\right)^k\nonumber\\
	%& = \frac{d^kn^{k/2}}{k!2^k\mu^{k/2}} 
		\Expp{N_{K_k}}&\approx  \frac{\sigma^{2k}}{k!\mu^k}, \quad k>3.
\end{align}
%Similarly, for $k=3$,~\eqref{eq:NHdsmall} gives
%\begin{align}
%	\Expp{N_{K_3}}
	%&=\frac{n^3}{6(\mu n)^{3}}\mathbb{E}[H^{2}]^3\nonumber\\
	%& = \frac{n^3}{6(\mu n)^{3}}\left(\frac{d}{2(\mu-1)}+\left(1-\frac{d}{2(\mu-1)}-\frac{d}{2(\sqrt{\mu n}-\mu)}\right)\mu^{2}+\frac{d\sqrt{\mu n}^{2}}{2(\sqrt{\mu n}-\mu)}\right)^3\nonumber\\
%	& \approx \frac{1}{6\mu^{3}}\left(\mu^2+\sigma^2\right)^3.
%	& = \frac{d^kn^{k/2}}{k!2^k\mu^{k/2}} \approx  \frac{\sigma^{2k}}{k!}.
%\end{align}
This is an intuitive result, because when maximizing $\mathbb{E}[X^2]$ while keeping the variance and the mean degree fixed, we expect to end up with some function of $\mathbb{E}[X^2]$.

For other subgraphs $H$ we can obtain similar results. 
When $H$ has $s_1$ vertices of degree 1, $s_2$ of degree 2 and denote $s_{\geq 3}$ vertices with degree at least 3,~\eqref{eq:NHdsmall} becomes
\begin{align}\label{eq:ENhvar}
	\Expp{N_{H}}
	%& \approx \frac{n^k}{\Aut(H)(\mu n)^{E_H}}\prod_{j=1}^{s_1}\mu \prod_{j=s+1}^{s_1+s_2}\left(\mu^2+\sigma^2\right)\prod_{j=s_1+s_2+1}^k\frac{\sigma^2\sqrt{\mu n}^{d_j}}{\sqrt{\mu n}^2}\nonumber\\
	%& \approx \frac{n^{k-E_H}}{\Aut(H)\mu^{E_H}}\mu^{s_1}\left(\mu^2+\sigma^2\right)^{s_2}\prod_{j=s_1+s_2+1}^k\sigma^2(\mu n)^{d_j/2-1}\nonumber\\
	%& =  \frac{n^{\sum_{i=1}^k(1-d_i/2)}}{\Aut(H)\mu^{E_H}}\mu^{s_1}\left(\mu^2+\sigma^2\right)^{s_2}\sigma^{2s_{\geq 3}}(\mu n)^{\sum_{j=s_1+s_2+1}^k(d_j/2-1)}\nonumber\\
	& \approx  \frac{n^{s_1/2}}{\Aut(H)\mu^{k-3/2s_1}}\left(\mu^2+\sigma^2\right)^{s_2}\sigma^{2s_{\geq 3}}.
\end{align}
Thus, the more degree-1 vertices a subgraph has, the more often it appears, as a scaling in $n$. When the number of degree-1 vertices remains unchanged, the scaling in $n$ of the subgraph count remains the same. However, having more degree-2 vertices increases the leading order term of the subgraph count. And as before, subgraphs with more automorphisms appear less often.

\section{Power-law random graphs}\label{sec:powerlaw}

%\textcolor{orange}{In deze sectie beperken tot cliques, toch?}

We now compare the results on the maximal clique counts among all weight distributions with the subgraph counts in the frequently-used power-law weights, to answer the question of how close power-law degrees are to the extremal random graph. Thus, we assume a power-law distribution with cutoff for the weights 
%\bl{moeten we hier notatie $\pprob(x)$ gebruiken?}
\begin{equation}
	\pprob(h)=Ch^{-\tau},
\end{equation}
for $h\in[1,h_c]$ and some $\tau$ and $C$. Then,
\begin{equation}\label{eq:pld}
	d=\frac{C(2\mu^{2-\tau}-1-h_c^{2-\tau}))}{\tau-2}+\frac{C\mu(-2\mu^{1-\tau}+1+h_c^{1-\tau}))}{\tau-1}.
\end{equation}
Observe that for $\tau>2$, $d$ is approximately constant, whereas for $\tau<2$ it grows as $h_c^{2-\tau}$ (as there $\mu$ also grows as $h_c^{2-\tau}$). We split the comparison into two classes for $\tau$: so called sparse ($2<\tau<3$) and dense ($1<\tau<2$) scale-free networks. 

{\it Sparse scale-free networks}. 
We now compare the number of subgraphs in power-law random graphs in the regime $2<\tau<3$ with the extremal number of subgraphs among all random graphs with equal mean and MAD. When $2<\tau<3$, under a cutoff at $h_s=h_c=\sqrt{\mu n}$, the expected number of cliques in a power-law random graph with degree-exponent $\tau$ equals~\cite[Eq.~(1.7)]{janssen2019}
\begin{equation}\label{eq:Ekkcutoffpl}
	\expec_{pl}[N_{K_k}]\approx \frac{ n^{k/2(3-\tau)}\mu^{k/2(1-\tau)}}{k!}\left(\frac{C}{k-\tau}\right)^k.
\end{equation}
The MAD-maximizer with the same $\mu$ and $d$ as this power-law distribution for $2<\tau<3$ becomes according to~\eqref{eq:EKkcl}
\begin{align}
    &\max_{\mathbb{P} \in \mathcal{P}_{(\mu,d)}} \Expp{N_{K_k}}\nonumber\\
    & \approx \frac{n^{k/2}}{k!2^k\mu^{k/2}}\Big(\frac{C(2\mu^{2-\tau}-1)}{\tau-2}-\frac{C(2\mu^{2-\tau}-\mu)}{\tau-1}\Big)^k.
\end{align}
This scaling in $n$ of $n^{k/2}$ is much larger than the scaling of $n^{k/2(3-\tau)}$ in power-law random graphs. 
Thus, the extremal random graph is asymptotically much more dense than power-law random graphs in terms of cliques. 
%asymptotically do not acquire the MAD-optimal number of subgraphs achieved by the extremal random graph. 

\begin{figure*}[tb]
    \centering
    \subfloat[$\tau=2.5$]{
    \includegraphics[width=0.48\textwidth]{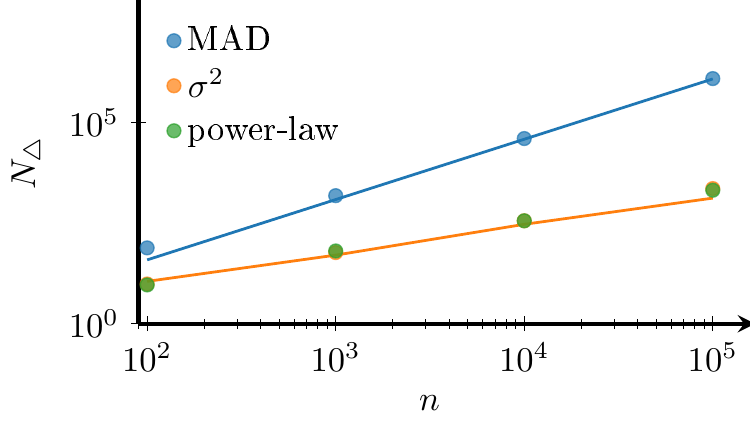}
    \label{fig:triang25}
    }
    \hfill
    \subfloat[$\tau=1.5$]{
    \includegraphics[width=0.48\textwidth]{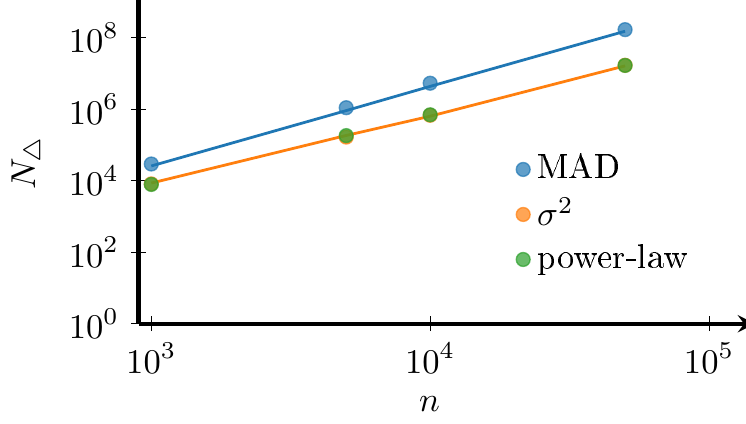}
    \label{fig:triang15}
    }
    \caption{The number of triangles against the network size $n$ in: a power-law random graph, an MAD-optimal model with same $\mu$ and $d$ as the power-law model, an MAD-optimal model with equal $\mu $ and $\sigma^2$ as the power-law model. Dots are simulated values, whereas the blue and the orange curves are the MAD-optimal predictions from~\eqref{eq:EKkcl} and~\eqref{eq:EKkvar}, respectively.}
    \label{fig:simpl}
\end{figure*}

Let us then compare the expected number of cliques in a power-law random graph~\eqref{eq:Ekkcutoffpl} with the extremal number of subgraphs with $d=2\sigma^2/(\sqrt{\mu n}-1)$, where $\mu$ and $\sigma^2$ denote the mean and variance of the power-law distribution. This ensures that the variance of the MAD extremal random graph is equal to the variance of the power-law distribution. 
The variance of power-laws with cutoff at $\sqrt{\mu n}$ equals $\sigma^2=\frac{C}{3-\tau}\sqrt{\mu n}^{3-\tau}$. Then,~\eqref{eq:EKkvar} yields for the variance-maximal number of cliques that
\begin{equation}\label{eq:Ekkcutoff}
	\max_{\mathbb{P} \in \mathcal{P}_{(\mu,\frac{\sigma^2}{\sqrt{\mu n}-1})}}\Expp{N_{K_k}}= \frac{ n^{k/2(3-\tau)}\mu^{k/2(1-\tau)}}{k!}\left(\frac{C}{3-\tau}\right)^k.
\end{equation}
When comparing the leading order constant to the one computed for power-law random graphs with cutoff at $\sqrt{\mu n}$, it exactly agrees for $k=3$, so for triangles. Therefore,~\eqref{eq:Ekkcutoff} suggests that power-law random graphs contain the maximal amount of triangles among all Chung-Lu models with the same variance. In other words: power-law random graphs are the most clustered random graphs among all Chung-Lu models with given variance. For larger cliques, the constant in~\eqref{eq:Ekkcutoff} is higher than the one for the power-law random graph with cutoff. Thus, power-law random graphs do not contain the maximal amount of larger cliques among all Chung-Lu random graphs with the same variance in terms of leading order constant. The scaling in $n$, $n^{k/2(3-\tau)}$ however, still agrees between the power-law number of cliques and the variance-extremal random graph, so that in order of magnitude, power-law random graphs achieve the largest possible number of cliques among all graphs with the same variance. Proving this however, needs the proof of equality in the lower bound of~\eqref{eq:Ekkcutoff}, which is an open problem due to the difficulties that arise when switching from fixed MAD to fixed variance. 

Figure~\ref{fig:triang25} illustrates these observations: The MAD-extremal random graph contains a number of triangles that grows significantly faster in $n$ than the power-law random graph. The MAD-extremal graph with equal variance as the power-law distribution on the other hand, contains the same number of triangles as the power-law random graph.

 {\it Dense scale-free networks}. 
 In the regime $1<\tau<2$, setting $h_s$ is not as straightforward as in the previous regime. Taking $h_s=h_c=\sqrt{\mu n}$ gives a convex connection probability, but $\mu$ grows in $h_s$ as well. This results in $h_s=h_c\sim n^{1/\tau}$ and $\mu\sim n^{2/\tau-1}$~\cite{bianconi2006}. Under this cutoff~\cite{bianconi2006}, the expected number of cliques in a power-law random graph scales as
 \begin{equation}\label{eq:Kkplsmall}
     \expec_{pl}[K_k]\sim n^{k/\tau}
 \end{equation}
 Now for $1<\tau<2$,~\eqref{eq:pld} yields $d\propto \mu\propto n^{2/\tau-1}$. Then,~\eqref{eq:EKkcl} gives that the maximal number of cliques among all subgraphs with the same MAD as the power-law distribution with exponent $\tau\in(1,2)$ scales as
% For $h_s=\sqrt{ n}$,  plugging this into~\eqref{eq:EKkcl} yields
%\begin{align}
%	\Expp{N_{K_k}}
%	& \approx \frac{(\tau-1)^kn^{k/2(3-\tau)}}{k!2^k}(1+o(1)),
%\end{align}
%which has a larger scaling in $n$ than the scaling of the number of cliques in a power-law random graph with $1<\tau<2$, which %scales as $n^{k/\tau}$~\cite{bianconi2006}. 
%Taking $h_s=n^{1/\tau}$, 
\begin{align}
   \max_{\mathbb{P} \in \mathcal{P}_{(\mu,d)}}\Expp{N_{K_k}}
	& \sim \frac{n^{k/2}\mu^{-k/2}}{k!2^k}\sim n^{k/\tau},
\end{align}
 which has the same scaling in $n$ as~\eqref{eq:Kkplsmall}. Thus, under the MAD framework, power-law random graphs with $1<\tau<2$ achieve the maximal clique scaling in the network size $n$, and power-law networks are the densest possible networks in terms of cliques.
 
 As in the $2<\tau<3$ regime, $\sigma^2\sim h_s^{3-\tau}$, so that $\sigma^2/h_s\sim h_s^{2-\tau}\sim d$. With $\mu\propto d$ \eqref{eq:EKkvar} yields 
 \begin{align}
      \max_{\mathbb{P} \in \mathcal{P}_{(\mu,\frac{\sigma^2}{h_s-1})}}\Expp{K_k}& \propto \frac{(d h_s)^k}{d^k} \propto n^{k/\tau},
 \end{align}
 which again has the same scaling as the number of cliques in the corresponding power-law random graph obtained from~\eqref{eq:Kkplsmall}.
Thus, this suggests that power-law random graphs contain the maximal amount of triangles among all Chung-Lu models with the same variance, similarly to the $2<\tau<3$ case.

Figure~\ref{fig:triang15} indeed illustrates that the MAD extremal random graph contains a larger number of triangles than a power-law with the same MAD and average degree, but that these numbers scale the same in $n$. The extremal random graph with equal variance as the power-law random graph contains the same number of triangles. 
For $2<\tau<3$ on the other hand, power-laws do not achieve the maximal triangle count scaling under fixed MAD.

Figure~\ref{fig:plK4} summarizes our findings. For $1<\tau<2$, power-law Chung-Lu random graphs achieve the maximal clique scaling among all Chung-Lu random graphs with the same MAD or variance. For $2<\tau<3$, Chung-Lu power-law random graphs achieve a lower number of $k$-cliques than the maximal number of $k$-cliques among all Chung-Lu random graphs with the same MAD. In fact, the maximal number cliques among all Chung-Lu random graphs with the same MAD as power-laws with $2<\tau<3$ scales as the number of cliques in a power-law Chung-Lu random graph for $\tau=2$.

 \begin{figure}[tb]
    \centering
    \includegraphics[width=0.45\textwidth]{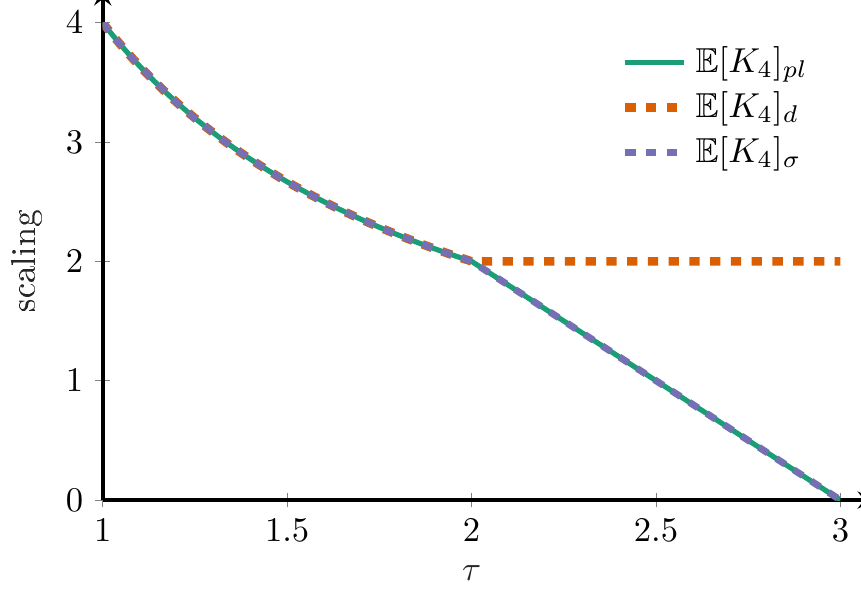}
    \caption{Scaling of the expected number of 4-cliques in $n$. $\mathbb{E}[K_4]_{pl}$ denotes the number of cliques of a power-law random graph with specified degree exponent $\tau$. $\mathbb{E}[K_4]_{d}$ denotes the maximal number of cliques in a Chung-Lu network with the same MAD as a power-law random graph with specified degree exponent $\tau$. $\mathbb{E}[K_4]_{\sigma}$ denotes the maximal number of cliques in a Chung-Lu network with $d=2\sigma^2/(h_s-1)$, where $\sigma^2$ is the variance of a power-law random graph with specified degree exponent $\tau$. }
    \label{fig:plK4}
\end{figure}

Furthermore, Figure~\ref{fig:plK4} suggests that for $1<\tau<3$, power-law random graphs achieve the maximal number of cliques among all weight distributions with the same variance. However, this is only based on a lower-bound technique.

\section{Data}\label{sec:data}
% \textcolor{orange}{Plaatjes in 6 en 7 zijn heel interessant en goed om te laten zien. Misschien meer in detail toelichten hoe de plaatjes worden gegenereerd. Normaal gesproken zou je de $h_\min$, $\tau$ en cutoff moeten schatten. Maar nu werken we met MAD of variantie. }
 
 We now investigate the performance of the predicted upper bounds on subgraph counts for nine network data sets that have cutoff below $\sqrt{\mu n}$~\cite{konect}. The data summary statistics are described in Table~\ref{tab:data}. Using only these summary statistics, we can compute the bound on the maximal number of subgraphs in networks with the same $\mu$, $d$ and $n$ using Theorem~\ref{thm:subgraphsmad}. 
 
In Figure~\ref{fig:ratios}, we plot the ratio between the actual subgraph counts in the data sets and this bound on the maximal number of subgraphs for all subgraphs of size four. All subgraphs appear significantly less than the bound predicted by Theorem~\ref{thm:subgraphsmad}. For the collaboration network in network science, the subgraph counts are the closest to the MAD-maximizer, most other subgraphs appear significantly less than the largest MAD bound. Thus, this ratio shows that 4-point subgraphs in the collaboration network of network scientists are closer to maximal than in the other data sets. 

In Figure~\ref{fig:ratiosvar}, we now compare the number of subgraphs with the lower bound on the maximal subgraph count under fixed variance instead of fixed MAD that we obtain from Theorem~\ref{thm:maxNhsigma}. Again, this figure plots the ratio between the actual subgraph counts and the predicted bound from Theorem~\ref{thm:maxNhsigma}. We see that most clustered subgraphs typically appear more often than the predicted bound: the ratios significantly exceed one for several data sets and several subgraphs. One reason for this deviation is that Thoerem~\ref{thm:maxNhsigma} assumes that the largest degree in these networks is equal to $\sqrt{\mu n}$. However, Table~\ref{tab:data} shows that for all data sets, this assumption does not hold. 
We therefore again compare the number of subgraphs with the bounds from Theorem~\ref{thm:subgraphsmad} and~\ref{thm:subgraphsmad}, but now using $h_c=h_{\max}$ instead, where $h_{\max}$ is the maximal degree of the data set. This yields Figure~\ref{fig:ratiosdmax}, which shows that indeed, using the correct cutoff in those networks explains a large part of these large values: now almost all subgraph counts are below the predicted maximal bounds, also for the variance-based lower bound in Figure~\ref{fig:ratiosvardmax}. Still, in three networks, some subgraphs appear more often than the MAD-variance based maximizer. In particular, this happens for the more clustered subgraphs, such as the complete graph on 4 vertices. This therefore shows that indeed, these networks are not generated by hidden-variable models, and are more clustered in terms of their numbers of complete graphs and cycles. 

\begin{figure*}[tb]
    \centering
\subfloat[MAD-based]{
\centering
    \includegraphics[width=0.48\linewidth]{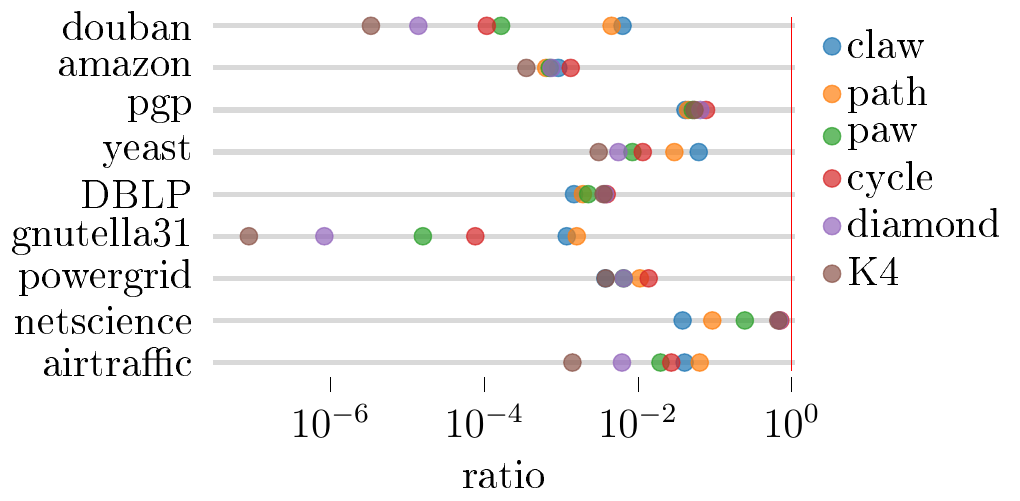}
    \label{fig:ratios}
}
\hfill
\subfloat[MAD-based with $d=\sigma^2/(2(\sqrt{\mu n}-1))$]{
    \includegraphics[width=0.48\linewidth]{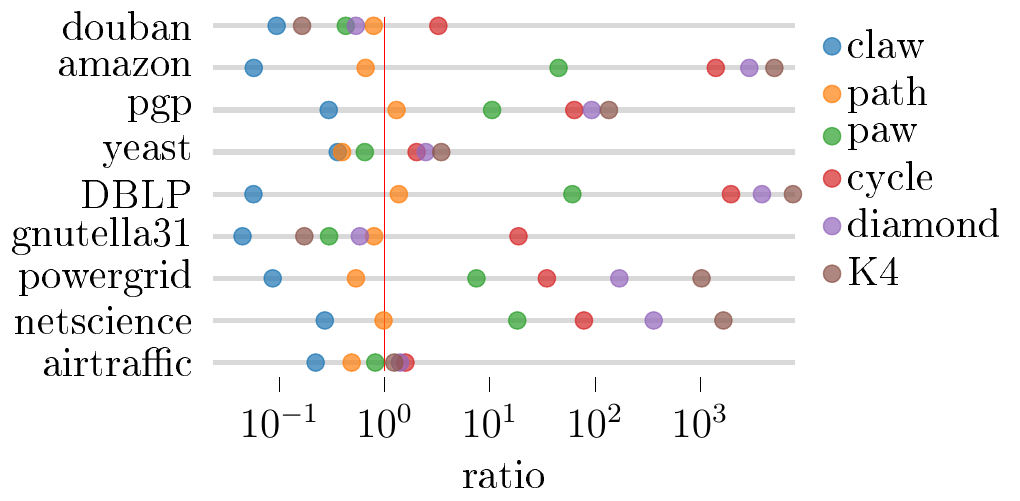}
    \label{fig:ratiosvar}
}
\caption{Ratio of the number of subgraphs of size 4 in nine data sets and the Chung-Lu maximal value of~\eqref{eq:ENHcl} and~\eqref{eq:ENhvar}. The red line indicates the MAD maximal bound.}
\label{fig:ratiosall}
\end{figure*}

\begin{figure*}[tb]
    \centering
\subfloat[MAD-based]{
\centering
    \includegraphics[width=0.48\textwidth]{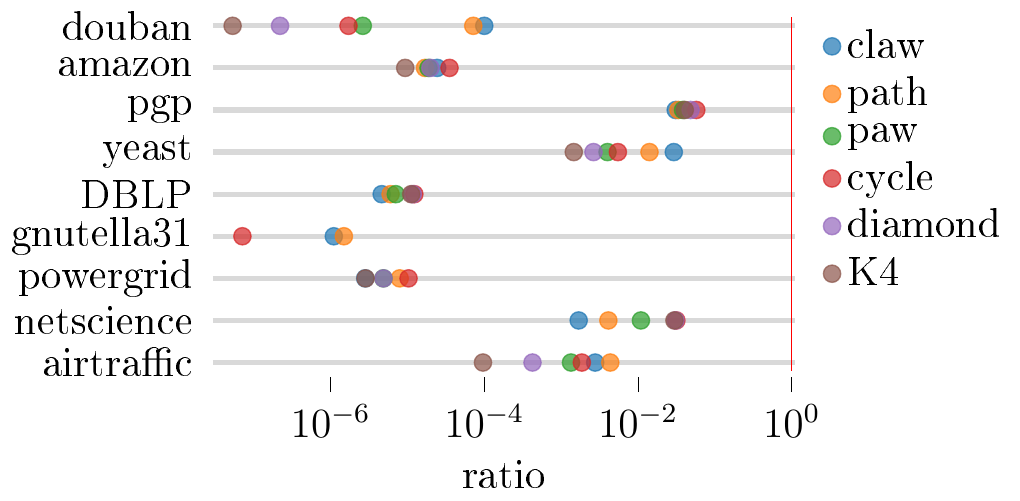}
    \label{fig:ratiosdmax}
}
\hfill
\subfloat[MAD-based with $d=\sigma^2/(2(h_{\max}-1))$]{
    \includegraphics[width=0.48\textwidth]{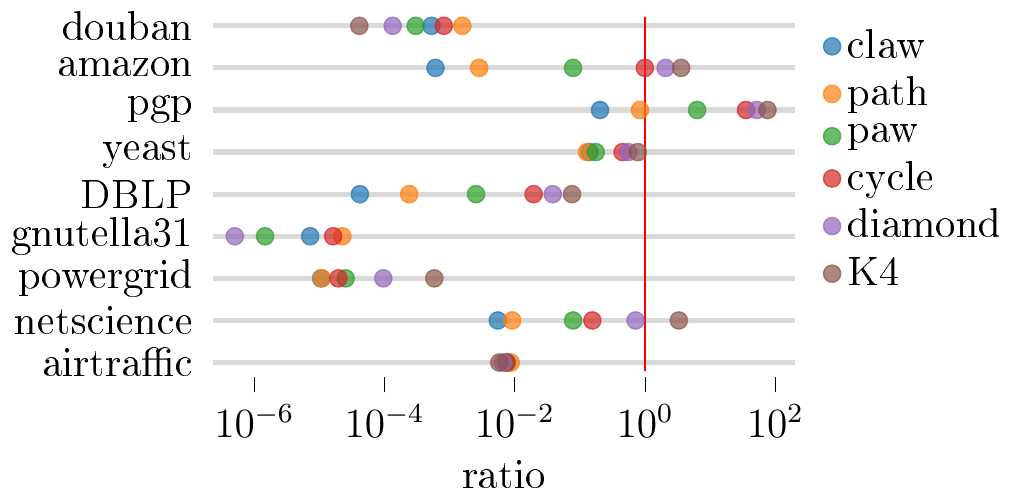}
    \label{fig:ratiosvardmax}
}
\caption{Ratio of the number of subgraphs of size 4 in nine data sets and the Chung-Lu maximal value of~\eqref{eq:ENHcl} and~\eqref{eq:ENhvar}. The red line indicates the MAD maximal bound. In these plots, instead of the default choice of $h_c=\sqrt{\mu n}$, $h_c$ is set to $h_{\max}$, the largest degree of the corresponding data set.}
\label{fig:ratiosalldmax}
\end{figure*}

\begin{table}[tb]
\renewcommand{\arraystretch}{1.2}
\begin{tabular}{  l  l  l  l  l  l  }
\toprule
	Name & $n$ & $\mu$ & $d$ & $h_{\max}$ &$\sigma^2$ \\ \midrule
	Amazon & 334863 & 5.53 & 3.01 & 549 & 33.19 \\ 
	Douban & 154908 & 4.22 & 5.07 & 287 & 138.02 \\ 
	DBLP & 317080 & 6.62 & 5.3 & 343 & 100.15 \\ 
	PGP & 10680 & 4.55 & 4.18 & 205 & 65.24 \\ 
	Yeast & 1870 & 2.44 & 1.72 & 56 & 10.01 \\ 
	Gnutella31 & 62586 & 4.73 & 4.49 & 95 & 32.5 \\ 
	US power grid & 4941 & 2.67 & 1.28 & 19 & 3.21 \\ 
	Netscience & 1461 & 3.75 & 2.28 & 34 & 11.96 \\ 
	airtraffic & 1226 & 4.27 & 2.82 & 37 & 18.72 \\ \bottomrule
\end{tabular}
\caption{Summary statistics of the network data. $h_{\max}$ denotes the maximal network degree.}
\label{tab:data}
\end{table}

\section{Discussion and outlook}\label{sec:out}
We have established distribution-free bounds on subgraph counts, using an optimization method that needs as input only the mean, MAD and range of the degrees. The bounds do not depend on detailed assumptions on a particular network degree distribution, and in fact hold for a wide class of degree distributions. 

The bounds are the sharpest possible and attained by an extremal random graph with a three-point degree distribution. This extremal random graph contains more subgraphs than the popular sparse graphs with power-law degree distributions with $\tau>2$. For dense graphs with $\tau<2$ on the other hand, power-law random graphs match the subgraph bounds. This implies that dense power-law random graphs have the highest possible subgraph counts among all possible degree distributions with the same mean and MAD. 

Furthermore, our bounds indeed bound the subgraph counts of nine real-world data sets, even though these data contain power-law and non power-law degree distributions, and are not generated by hidden-variable models, demonstrating the robustness of our approach.  

We believe that the optimization methid in this paper can be employed for robust analysis of other network statistics as well, such as clustering coefficients and degree correlations. 
Another avenue concerns the relation of the MAD extremal random graph to the maximal possible eigenvalue of graphs with given mean degree and MAD. Cycle counts can be linked to the maximal eigenvalue of the adjacency matrix~\cite{tsourakakis2008}. Therefore, our results on bounds on the maximal number of subgraphs may also provide bounds on the maximal possible eigenvalue of a network with given average degree and MAD. Investigating these eigenvalue bounds further would be an worthwhile topic for further research, especially as the largest eigenvalue of an adjacency matrix is strongly linked to epidemic properties on the network~\cite{pastor2014}. 
%(I think here we would get that the maximal eigenvalue scales as $\sqrt{n}$, which is larger than for power-laws with $\tau>2$, where it scales as $n^{(2-\tau)/2}$, and equal for $\tau=2$, implying that the critical epidemic threshold is lower in the extremal random graph than in power-law random graphs.)

While hidden-variable models have proven useful for modeling many types of networks, one of their disadvantages is that they often lead to locally tree-like networks with little clustering. However, this paper shows that when maximizing over all possible hidden-variable distributions, they may contain more clustered subgraphs than real-world networks. Several extensions of hidden-variable models to include clustering exist, including those that add geometry, or higher-order interactions. It would be interesting to apply the MAD framework to these types of models as well. For geometric hidden-variable models, this would lead to difficult optimization problems, as it contains two sources of randomness: the hidden variables and their geometric positions. Investigating if these generally more clustered models also lead to more clustering when maximizing over the variables is an interesting question for further research.

\bibliographystyle{abbrv}
\bibliography{references}

\appendix
\section{Dominating contributions in the extremal random graph}\label{app:scalingMAD}
We now investigate the behavior of~\eqref{eq:NHgen} when $f$ satisfies Assumption~\ref{ass:functionf2}. In that case, when $h_s=h_c$,
\begin{align}\label{eq:ENHsplit2}
	&\Expp{N_H}=\frac{n^k}{\Aut(H)}\nonumber\\
	& \times \sum_{i_1\in\{a,\mu,h_c\}}\!\!\dots\!\! \sum_{i_k\in\{a,\mu,h_c\}}\prod_{j=1}^kp_{i_j}\prod_{\{s,t\}\in E_H}\frac{i_si_t}{h_c^2}r\Big(\frac{i_si_t}{h_c^2}\Big)\nonumber\\
	& =\frac{n^k}{\Aut(H)h_s^{2E_H}}\nonumber\\
	& \times \!\! \sum_{i_1\in\{a,\mu,h_c\}}\!\!\dots \!\!\sum_{i_k\in\{a,\mu,h_c\}}\prod_{j=1}^kp_{i_j}i_j^{d_j}\!\!\prod_{\{s,t\}\in E_H}\!\! r\Big(\frac{i_si_t}{h_c^2}\Big).
\end{align}
We then show that for every vertex $j$, the summation over $i_j\in\{a,\mu,h_c\}$ is dominated by the term containing $h_c$, so that the other terms may be ignored. Writing out this sum over $i_j\in\{a,\mu,h_c\}$ and plugging in~\eqref{eq:3pointp} gives
\begin{align}\label{eq:contrjsplit}
	&\sum_{i_j\in\{a,\mu,h_c\}}p_{i_j}i_j^{d_j}\prod_{\{j,t\}\in E_H}r\Big(\frac{i_ji_t}{h_c^2}\Big) \nonumber\\
	& = \frac{d}{2(\mu -a)}a^{d_j}\prod_{\{j,t\}\in E_H}r\Big(\frac{ai_t}{h_c^2}\Big)\nonumber\\
	& +\Big(1-\frac{d}{2(\mu -a)}-\frac{d}{2(h_c-\mu)}\Big)\mu^{d_j}\prod_{\{j,t\}\in E_H}r\Big(\frac{\mu i_t}{h_c^2}\Big)\nonumber\\
	& \quad +\frac{d}{2(h_c-\mu)}h_c^{d_j}\prod_{\{j,t\}\in E_H}r\Big(\frac{i_t}{h_c}\Big).
\end{align}
This is an equation with three terms. We now investigate the scaling of these three terms in $n$, and show that the last term, containing $h_c$, dominates. Indeed, for the first two terms we obtain
\begin{equation}\label{eq:contra}
	\frac{d}{2(\mu -a)}a^{d_j}\prod_{\{j,t\}\in E_H}r\Big(\frac{ai_t}{h_c^2}\Big)=\bigO{1}
\end{equation}
and 
\begin{equation}
	\Big(1-\frac{d}{2(\mu -a)}-\frac{d}{2(h_c-\mu)}\Big)\mu^{d_j}\!\!\! \prod_{\{j,t\}\in E_H}\!\!r\Big(\frac{\mu i_t}{h_c^2}\Big)=\bigO{1},
\end{equation}
as $r(x)\leq 1$ under Assumption~\ref{ass:functionf2}. We now show that the third term on the other hand, grows in $n$. There we obtain
\begin{equation}\label{eq:contrb}
	\frac{d}{2(h_c-\mu)}h_c^{d_j}\prod_{\{j,t\}\in E_H}r\Big(\frac{i_t}{h_c}\Big)\sim  \frac{dh_c^{d_j-1}}{2}.
\end{equation}
Indeed, as $h_c\to\infty$ and $h_c\gg\mu$, $(h_c-\mu)\sim h_c$. Furthermore, by our assumptions on $r(x)$, $r(c/h_s)=1+o(1)$, while $r(1)$ is constant, so that the product term does not contribute to the scaling.
Thus, when $d_j\geq 2$, the contribution to~\eqref{eq:ENHsplit2} from $i_j=h_c$ grows in $n$, and therefore dominates the contributions from $i_j=a,\mu$. Therefore, for subgraphs with minimal degree at least 2, we can ignore the terms in~\eqref{eq:ENHsplit2} with $i_j=a$ or $i_j=\mu$, yielding 
\begin{align}
	\Expp{N_H}
	& \sim \frac{n^k}{\Aut(H)h_s^{2E_H}}\prod_{j=1}^k\frac{d}{2(h_c-\mu)}h_c^{d_j}\prod_{\{s,t\}\in E_H}r(1)\nonumber\\
	& =\frac{n^kd^kh_c^{2E_H-k}}{h_c^{2E_H}2^k\Aut(H)}r(1)^{E_H}.
\end{align}

When $d_j=1$, all contributions in~\eqref{eq:contra}-\eqref{eq:contrb} have constant order of magnitude in $n$. Therefore, for those vertices, all terms in the summations in~\eqref{eq:ENHsplit2} have to be included, which gives
\begin{align}
	& \Expp{N_H}
	 \sim \frac{n^k}{\Aut(H)h_c^{2E_H}}\prod_{j:d_j\geq 2}\frac{d}{2h_c}h_c^{d_j}\nonumber\\
	 &\quad \times \prod_{j:d_j=1}\left(\frac{dh_c}{2h_c}r(1)+\Big(1-\frac{d}{2(\mu -a)}\Big)\mu +\frac{da}{2(\mu -a)}\right)\nonumber\\
 & \quad \times\prod_{\{s,t\}:d_s,d_t\geq 1\in E_H}r(1)\nonumber\\
	& =\frac{n^kd^{n-n_1}h_c^{2E_H-k}}{\Aut(H)h_c^{2E_H}2^{n-n_1}}\left(\frac{d}{2}(r(1)-1)+\mu\right)^{n_1}r(1)^{E_H-n_1}.
	%& =\frac{n^kd^k}{h_s^k\Aut(H)2^{n_{\geq 2}}}r(1)^{E_H-n_1}(1+o(1)),
\end{align}
Taken together, we obtain Theorem~\ref{thm:subgraphsmad}.

\section{Dominating contribution for diminishing $d$}\label{app:MADvar}
We start from~\eqref{eq:contrjsplit}, and assume that $\sigma^2/h_c\to 0$. Again, we investigate which of the three terms in the summation over $a,\mu, h_c$ dominates. We therefore investigate the scaling in $n$ of all three terms, and start with the terms containing $a$ and $\mu$.
	Because $r(x)\leq 1$ under Assumption~\ref{ass:functionf2},
	\begin{equation}
		\frac{d}{2(\mu -a)}a^{d_j}\prod_{\{j,t\}\in E_H}r\Big(\frac{ai_t}{h_c^2}\Big)=\bigO{\frac{\sigma^2}{h_c}}
	\end{equation}
	and again
	\begin{equation}
		\Big(1-\frac{d}{2(\mu -a)}-\frac{d}{2(h_c-\mu)}\Big)\mu^{d_j}\prod_{\{j,t\}\in E_H}r\Big(\frac{\mu i_t}{h_c^2}\Big)=\bigO{1}.
	\end{equation}
	We now turn to the term containing $h_c$. Because $h_c\to\infty$ and $r(x)$ is continuous, we have that $r(0)=1$, $r(c/h_s)=1+o(1)$, while $r(1)$ is constant. Therefore,
	\begin{equation}
		\frac{d}{2(h_c-\mu)}h_c^{d_j}\prod_{\{j,t\}\in E_H}r\Big(\frac{i_t}{h_c}\Big)= \Theta\left(\sigma^2h_c^{d_j-2}\right).
	\end{equation}
	This shows that when $d_j\geq 3$, the contribution to~\eqref{eq:ENHsplit} from $i_j=h_c$ dominates the contributions from $i_j=a,\mu$. On the other hand, when $d_j=2$, the contribution from $i_j=h_c$ and from $i_j=\mu$ have the same order of magnitude. Finally, when $d_j=1$,  the contribution to~\eqref{eq:ENHsplit} from $i_j=\mu$ dominates the other contributions. Using that $r(\mu^2/h_c^2)=1+o(1)$ and $r(\mu/h_c)=1+o(1)$, ~\eqref{eq:ENHsplit} becomes
	\begin{align}
		&\Expp{N_H}
		 \sim \frac{n^kr(1)^{E_{\geq 3,\geq 3}}}{\Aut(H)h_c^{2E_H}}\prod_{j:d_j\geq 3}\frac{\sigma^2}{h_c^2}h_c^{d_j}\nonumber\\
		 & \times \prod_{j:d_j= 2}\left(\sigma^2r(1)^{n_{j,\geq 3}}+\mu^2\right)\prod_{j:d_j= 1}\mu \nonumber\\
		& = \frac{n^kr(1)^{E_{\geq 3,\geq 3}}}{\Aut(H)h_c^{2E_H}}\left(\sigma^2r(1)+\mu^2\right)^{n_{2,1}}\left(\sigma^2r(1)^2+\mu^2\right)^{n_2-n_{2,1}}\nonumber\\
		& \times \mu^{n_1}\sigma^{2n_{\geq 3}} h_c^{\sum_{j:d_j\geq 2}(d_j-2)}\nonumber\\
		& = \frac{n^kh_c^{2E_H-2k+n_1}r(1)^{E_{\geq 3,\geq 3}}}{\Aut(H)h_c^{2E_H}}\left(\sigma^2r(1)+\mu^2\right)^{n_{2,1}}\nonumber\\
		& \times \left(\sigma^2r(1)^2+\mu^2\right)^{n_2-n_{2,1}}\mu^{n_1}\sigma^{2n_{\geq 3}} (1+o(1)),
	\end{align}
	where $n_i$ and $n_{\geq i}$ denote the number of vertices of degree $i$ or degree at least $i$ in $H$, and $E_{\geq 3,\geq 3}$ denotes the number of edges between vertices of degree at least 3 in $H$.

\section{Ordering of subgraphs of size 5}\label{sec:motif5order}
\begin{figure*}[tb]
\includegraphics[width=\textwidth]{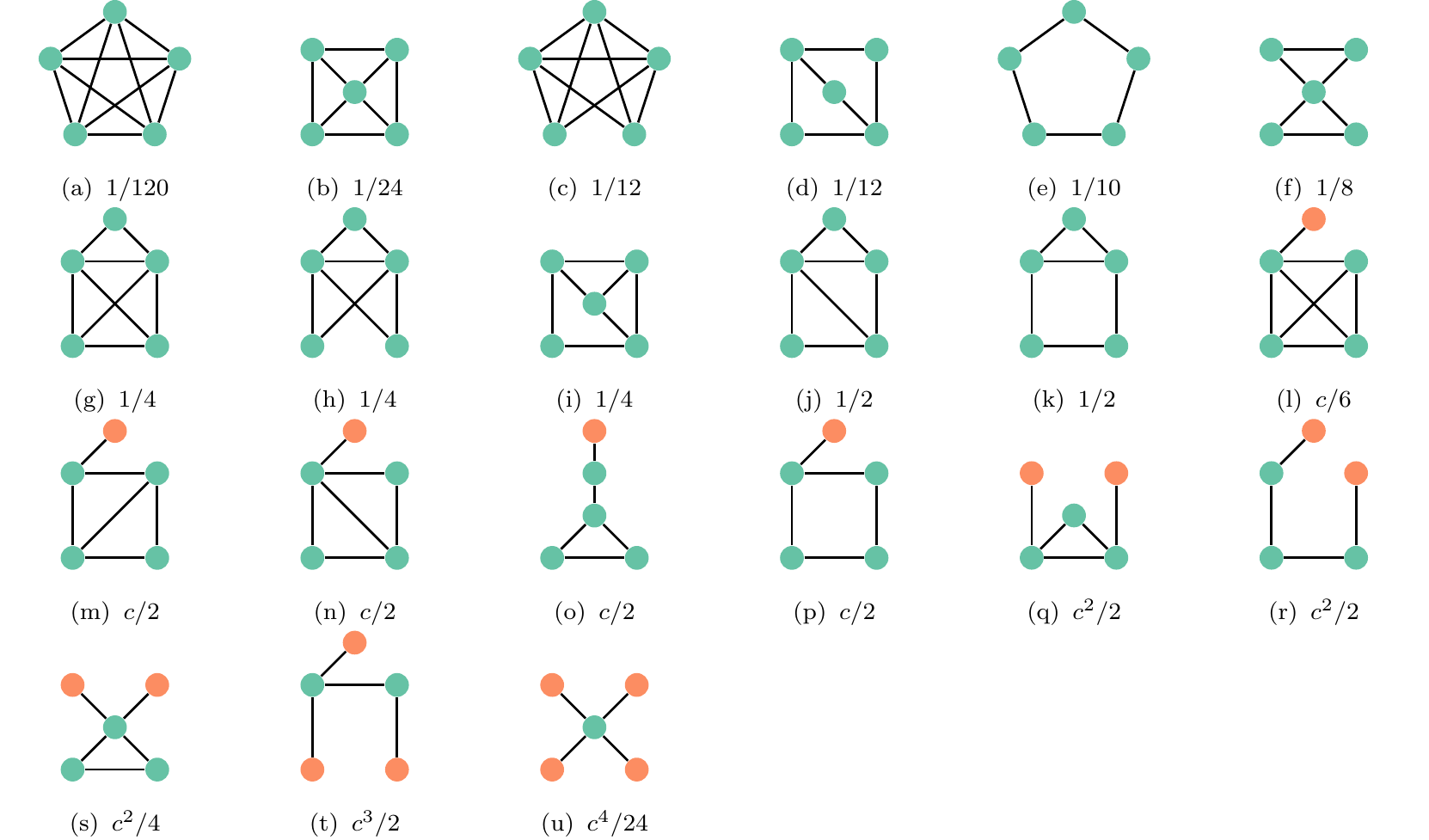}
	\caption{The leading constant $c^{n_1}/\Aut(H)$ for the maximal scaling of the number of subgraphs in $n$ for all subgraphs on 5 vertices, where $c=2\mu/d$.}
	\label{fig:motif5}
\end{figure*}
\end{document}